\newif\ifFull
    \renewcommand{\subsection}[1]{\paragraph{\bf #1.}}
    \theoremstyle{definition}
    \newtheorem{theorem}{Theorem}
    \newtheorem{lemma}[theorem]{Lemma}
    \newtheorem{definition}[theorem]{Definition}
    \newtheorem{proposition}[theorem]{Proposition}
    \let\doendproof\endproof
    \renewcommand\endproof{~\hfill\qed\doendproof}
\newtheorem{observation}[theorem]{Observation}
\newcommand{\set}[1]{\left\{#1\right\}}
\newcommand{\varset}[2]{\left\{#1\ \middle|\ #2\right\}}
\newcommand{\abs}[1]{\left\lvert #1 \right\rvert}
\newcommand{\Round}[1]{\left( #1 \right)}
\newcommand{\Square}[1]{\left[ #1 \right]}
\newcommand{\defeq}{\stackrel{\text{def}}{=}}
\renewcommand{\emptyset}{\varnothing}
\renewcommand{\epsilon}{\varepsilon}
\newcommand{\prob}[1]{\mathbb{P}\Square{#1}} % probability of
\newcommand{\expec}[1]{\mathbb{E}\Square{#1}} % expected value of
\newcommand{\keygen}{\mathsf{keygen}}
\newcommand{\Aux}{\mathsf{Aux}}
\newcommand{\dist}{\mathsf{dist}}
\newcommand{\wmark}{\mathsf{mark}}
\newcommand{\identify}{\mathsf{identify}}
\newcommand{\N}{\mathbb{N}} % natural numbers
\newcommand{\R}{\mathbb{R}} % real numbers
\newcommand{\wlabel}{\mathsf{label}}
\newcommand{\id}{\mathsf{id}}
\renewcommand{\vector}[1]{\mathbf{#1}}
\newcommand{\hm}{i_{H}} % index separating high- and mid-degree vertices
\newcommand{\ml}{i_{M}} % index separating mid- and low-degree vertices
\title{Models and Algorithms for Graph Watermarking}
\author{David Eppstein$^1$ \hspace*{2em} Michael T. Goodrich$^1$ 
    \hspace*{2em}
    Jenny Lam$^2$ \\
    Nil Mamano$^1$
    \hspace*{2em}
    Michael Mitzenmacher$^3$ 
    \hspace*{2em}
    Manuel Torres$^1$
\\[4pt]
    $^1$Dept. of Computer Science, University of California, Irvine, CA USA \\
    $^2$Dept. of Computer Science, San Jos\'{e} State University, San Jos\'{e}, CA USA \\
    $^3$Dept. of Computer Science, Harvard University, Cambridge, MA USA 
}
\date{}
\newcommand{\proofErdosDegSep}{
We quantify and extend the probability analysis 
of a proof from~\cite{Bollobas1998}.
Let
\[
K = pn + (x - \epsilon)(pnq)^{1/2}, \quad \epsilon = \Round{\log(n/m)}^{1/2}.
\]
The event of the result fails if $d_m < K$ or if there is $i < m$ such that $d_i - d_{i+1} < K$.

The statement of theorem 3.12 of \cite{Bollobas1998} still holds when the words ``a.e. $G_p$ satisfies'' are replaced by ``$G_p$ satisfies with probability greater than $1-1/\omega(n)^2$''. This can be seen directly from the part of the proof where Chebychev's inequality is applied. 

By this result, the probability that $d_m < K$ is $1/\Square{m \Round{\log (n/m)}^2}$. The probability that $d_i - d_{i+1} < K$ for a given $i < m$ is $O(\alpha(n))$.
}
\newcommand{\proofLemErdosSep}{
We prove the theorem with probability at least $1 - (d+2)n^{-(1-\epsilon)/8}$.
Let $\alpha(n) = dn^{-(1-\epsilon)/4}$ and $m = h$. By \cref{lem:erdos-deg-sep}, the probability that $d_i - d_{i + 1} < d$ for some $i < h$ is at most $(d+1)n^{-(1-\epsilon)/8}$.

Let $X_{ij}$ be the expected neighborhood distance between two vertices $i,j \geq h$. We have 
\[
\expec{X_{ij}} = mp(1-p) \geq (2C + 1)n^{(1-9\epsilon)/8}\log n \geq (2C+1)\log n,
\]
so that, if $d' = C \log n$,
\[
\frac{(\expec{X_{ij}} - d')^2}{\expec{X_{ij}}} \geq \expec{X_{ij}} - 2d' \geq C \log n.
\]
Since the high-degree vertices are separated by more than two degrees, the fact that they are high-degree vertices is independent of whether they are neighbors of $i$ and $j$.
Consequently, we can apply a Chernoff bound (\cref{lem:chernoff}.)
Then, by the union bound, the probability that $X_{ij} < C \log n$ for some medium-degree $i, j$ is less than $n^{-C + 2} \leq n^{-(1-\epsilon)/8}$.
}
\newcommand{\proofLemIDSep}{
The expected distance between two such strings is at least $2\ell p(1 - p) \geq \ell p.$ 
Applying \cref{lem:chernoff} with $\lambda = \ell p/2$, we have that the probability that their Hamming distance is less than $\ell p/2$ is at most $e^{-\ell p/8}\leq n^{-(2C+C')}$. Therefore, the probability that at least two out of $k$ strings are within Hamming distance $D \leq \ell p/2$ of each other is at most $k^2n^{-(2C+C')} = n^{-C'}$.
}
\newcommand{\proofLemGuess}{
In the process of selecting $\ell$ edges without replacement, let $A$ be the event that the sample contain at least $R$ red edges, and let $B$ be the event that the sample satisfies the degree constraint. The event whose probability we want to bound is equal to
\[
  \prob{A|B} \leq \frac{\prob{A}}{\prob{B}}.
\]

Let us first show that $\prob{B}$ can be lower bounded by a constant.  To prove this, we select $2\ell$ vertices with replacement uniformly at random, and pair consecutive vertices to obtain $\ell$ edges.  Choosing vertices uniformly in this way will simplify showing that the degree constraint is satisfied.  Of course we want to avoid ``self-loops'', or edges where both end vertices are the same.  Let $C$ denote the event that there is a vertex that is incident to more than $t$ edges of the sample.  Also, let $D$ denote the event that the sample contains no self-loops and no duplicate edges.  Then

\[
  \mathbb{P}[\bar{B}] = \prob{C|D} \leq \frac{\prob{C}}{\prob{D}}.
\]
Now, the probability of encountering a self-loop is $1/N$ and the probability of an edge being a duplicate of another is at most $2/N^2$. Therefore,
\[
  \mathbb{P}[\bar{D}] \leq \frac{\ell}{N} + \binom{\ell}{2} \frac{2}{N^2} \leq \frac{2\ell}{N}.
\]
By \cref{eq:convergence}, $\ell/N \to 0$. So $\prob{D}$ is bounded away from 0. Moreover, since the edges now consist of pairs of independently chosen vertices, we can approximate the number of edges incident to each vertex by $N$ independent Poisson random variables with parameter $2\ell/N$ thusly:
\[
  \prob{C} \leq N \Round{\frac{e^{-2\ell/N}(2\ell e/N)^t}{t^t}} (e\sqrt{2\ell}),
\]
where the middle factor is a bound on the probability that one Poisson variable is at least $t$ (Theorem 5.4 of~\cite{MitUpf2005}), and the last factor is an adjustment factor for this approximation (Corollary 5.9 of~\cite{MitUpf2005}). This expression is bounded by a constant factor times the expression on the left-hand side of \cref{eq:convergence}. Consequently, $\mathbb{P}[\bar{B}]$ converges to 0, and for sufficiently large $N$, $\mathbb{P}[B] \geq 1/2$, as was to be shown.

Now we find an upper bound for $\prob{A}$. To do this, we select $\ell$ edges with replacement uniformly at random. Because $\ell$ is relatively small when compared to $N$, it is unlikely that the sample will contain any duplicates. Formally, let $E$ be the event that the sample contains at least $R$ red edges, and $F$ be the event that the sample consists of distinct edges. We have
\[
  \prob{A} = \prob{E|F} \leq \frac{\prob{E}}{\prob{F}}.
\]
The probability that two selected edges are the same edge is $1/\binom{N}{2}$. So
\[
  \mathbb{P}[\bar{F}] \leq \binom{\ell}{2} / \binom{N}{2} \leq \frac{\ell^2}{2} \frac{4}{N^2} = 2\Round{\frac{\ell}{N}}^2.
\]
So for large enough $N$, $\prob{F}$ is bounded below by $\frac{1}{2}$. 

Finally, we bound $\prob{E}$. The expected number $X$ of red edges in this sample is $\expec{X} = \ell r / \binom{N}{2} $ which is bounded below by $2\ell r/N^2$ and bounded above by $4\ell r/N^2 = R/2$. 
So using these bounds and a Chernoff bound (\cref{lem:chernoff}),
where we set $\lambda$ equal to $\expec{X}$, we have that
\[
  \prob{E} \leq \exp\Round{-\frac{6}{7} \frac{2\ell r}{N^2}}.
\]
If $\ell r/N^2 \to 0$ as $n \to \infty$, set $\lambda$ equal to $1 - \expec{X} = 1 - \Theta(\ell r/N^2)$:
\[
\prob{E} \leq \exp\Round{-\frac{c}{\expec{X}}}
\]
for some constant $c> 0$. Putting it all together, we have that for large enough $N$, $\prob{B}\geq 1/2$ and $\prob{A}$ is bounded above by $2$ times one of the two bounds for $\prob{E}$. This proves the result.
}
\newcommand{\proofThmPowerDetAdv}{
The proof is similar to the proof of \cref{thm:erdos-det-adv}. An upper bound on the advantage of any deterministic adversary $A: \mathcal{G} \to \mathcal{G}$ on graphs on $n$ vertices is given by the conditional probability
\[
\prob{\identify(z, G, \id_1,\ldots,\id_k, G_A) \neq \id | \dist(G, G_A) < \theta},
\]
where the parameters passed to $\identify$ are defined according to the experiment in \cref{alg:security}. We show that this quantity is polynomially negligible.

For $G_A$ to be successfully identified, it is sufficient for the following three conditions to hold:
\begin{compactenum}[1.]
\item the original graph $G = G(\vector{w}^\gamma)$ is $(4\log n, 4\log n)$-separated;
\item the Hamming distance between any two $\id$ and $\id'$ involved in a pair in $S$ is at least $D = 4(2C'' + C') \log n$;
\item $A$ changes fewer than $D/2$ edges of the watermark.
\end{compactenum}
The proof is similar to the proof of \cref{thm:erdos-det-adv}. To be able to apply \cref{lem:guess}, we need to show that \cref{eq:convergence} holds with $N = \ml$. Recall that
\[
p = K_0 \Round{n^{\gamma - 3}\ml^2}^{-\frac{1}{\gamma - 1}}, \qquad
\ml= K_2\ n^{\Gamma}\Round{\log n}^{\Gamma'}
\]
where
\[
\Gamma = -\frac{2\gamma^2 - 8  \gamma + 5}{2\gamma - 1}, \qquad \Gamma' = -\frac{3(\gamma - 1)^2}{2\gamma - 1}.
\]
By the definition of $\ell$s given in the statement of this theorem, we have
%\[
%r 
%%= \frac{p(\ml)^2}{32} 
%%= \frac{K_0}{32} n^{-\frac{\gamma - 3}{\gamma -1}}(\ml)^{2(1-\frac{1}{\gamma - 1})} 
%%= \frac{K_0}{32} K_2^{2(1-\frac{1}{\gamma - 1})} n^{-\frac{\gamma - 3}{\gamma -1}+ 2\Gamma(1-\frac{1}{\gamma - 1})}\Round{\log n}^{ 2\Gamma'(1-\frac{1}{\gamma - 1})} 
%= \frac{K_0}{32} K_2^{2(1-\frac{1}{\gamma - 1})} n^{-\frac{4\gamma^2-18\gamma + 17}{2\gamma - 1}} \Round{\log n}^{-\frac{6(\gamma - 1)(\gamma - 2)}{2\gamma - 1}}
%\]
\[
\ell 
%= \frac{\log n}{p} 
%= \frac{1}{K_0} K_2^{2(\frac{1}{\gamma - 1})} n^{\frac{\gamma - 3}{\gamma -1}+ 2\Gamma(\frac{1}{\gamma - 1})}\Round{\log n}^{2\Gamma'(\frac{1}{\gamma - 1})+1} 
= a\frac{1}{K_0} K_2^{2(\frac{1}{\gamma - 1})} n^{-\frac{2\gamma - 7}{2\gamma -1}} \Round{\log n}^{\frac{5-4\gamma}{2\gamma - 1}}
\]
for a constant $a$. Therefore
\begin{equation}
\label{eq:power-convergence}
\frac{\ell^{t+1}}{\ml^{t-1}}
%= \frac{1}{K_0} K_2^{2(\frac{1}{\gamma - 1})(t+1)-1(t-1)} n^{-\frac{2\gamma - 7}{2\gamma -1}(t+1)-\Gamma(t-1)} \Round{\log n}^{\frac{5-4\gamma}{2\gamma - 1}(t+1)-\Gamma'(t-1)}
%= \frac{1}{K_0} K_2^{2(\frac{1}{\gamma - 1})(t+1)-(t-1)} n^{\frac{2(\gamma - 2)(\gamma - 3)}{2\gamma -1}t + \frac{2(-\gamma^2+3\gamma+1)}{2\gamma-1}} \Round{\log n}^{\frac{5-4\gamma}{2\gamma - 1}(t+1)-\Gamma'(t-1)}
= b (c^{L_1(t)}) (n^{L_2(t)}) (\log n)^{L_3(t)},
\end{equation}
where $b$ and $c$ are constants and $L_1, L_2$ and $L_3$ are linear functions of $t$ that are parameterized by $\gamma$. In particular, 
\[
L_2(t) = \frac{2(\gamma - 2)(\gamma - 3)}{2\gamma -1}t + \frac{2(-\gamma^2+3\gamma+1)}{2\gamma-1}.
\]
For the range of values of $\gamma$ we are concerned with (i.e., $5/2< \gamma < 3$), the first factor is negative. %, and the second factor is positive. 
Since $t = \log n$ is positive, this shows that \cref{eq:power-convergence} converges to 0 as $n \to \infty$.

Now we prove that each of the three listed conditions fails with polynomially negligible probability. We invoke \cref{lem:power-sep} to show that this is the case for the first condition. For the second condition, we use the fact that each bit is independently set to 1 with some probability $P[i,j]$ where $i\leq \ml$ and $j\leq\ml$. Thus $p = P[\ml,\ml]$ is a lower bound on these probabilities. This, together with the definition of $\ell$ given in this theorem, allow for the hypotheses of \cref{lem:id-sep} to be met. Thus, we can apply this lemma and show that condition 2 fails with polynomially negligible probability. 

Finally, we have shown in our earlier discussion leading up to the\cref{eq:convergence} that the conditions of \cref{lem:guess} are met. Let us identify the edges chosen by the adversary as the red edges, and let the $\ell$-sampled edges be the ones that the watermarking algorithm selected in procedure $\keygen$. By \cref{lem:guess}, the number of edges that are common to both selections is at least
\[
R = \frac{8\ell r}{N^2} = \frac{16Dr}{pN^2} = \frac{D}{2}
\]
where we used the fact that $\ell = 2D/p$, is at most
\[
4 \exp\Round{-\frac{12}{7} \frac{\ell r}{N^2}} = 4 \exp\Round{-\frac{12}{7} \frac{D}{16}} = 4n^{-\frac{3}{7}(2C''+C')}.
\]
So we have our result.
}
\begin{document}

\maketitle

%\vspace{-6ex}
\begin{abstract}
We introduce models and algorithmic foundations
for graph watermarking. 
\ifFull
Our frameworks include security definitions and proofs, 
as well as characterizations when graph watermarking is 
algorithmically feasible, in spite of the fact that the general problem
is NP-complete by simple reductions from the subgraph isomorphism 
or graph edit distance problems.
 In the digital watermarking of many types of files, an implicit step in the recovery of a watermark is the mapping of individual pieces of data, such as image pixels or movie frames, from one object to another. 
 In graphs, this step corresponds to approximately matching vertices of one graph to another based on graph invariants such as vertex degree.
\fi
Our approach is based on characterizing the feasibility 
of graph watermarking
in terms of keygen, marking, and identification functions
defined over graph families with known distributions. We demonstrate 
the strength of this approach with exemplary watermarking schemes
for two random graph models,
the classic Erd\H{o}s-R\'{e}nyi model and a random power-law graph model,
both of which are used to model real-world networks. 
\end{abstract}

\section{Introduction}

In the classic media watermarking problem,
we are given a digital representation, $R$, 
for some media object, $O$, such as a piece of music, a video, or an image, 
such that there is a rich space, $\cal R$,
of possible representations for $O$ besides 
$R$ that are all more-or-less equivalent.
Informally, a \emph{digital watermarking} scheme for $O$ is a function
that maps $R$ and a reasonably short random message, $m$,
to an alternative
representation, $R'$, for $O$ in $\cal R$. 
The verification of such a marking scheme takes $R$ and a 
presumably-marked representation, $R''$ (which was possibly altered by
an adversary), 
along with the set of messages previously used for marking,
and it either
identifies the message from this set that was assigned to $R''$ 
or it indicates a failure.
Ideally, it should difficult for an adversary
to transform a representation, $R'$ (which he was given),
into another representation $R''$ in $\cal R$, 
that causes the identification function to fail.
\ifFull
Some example applications of such digital watermarking schemes include 
steganographic communication and marking digital works for copyright protection
(e.g., see~\cite{cox2007digital,katzenbeisser2000information,shih2007digital}).
\else
Some example applications of such digital watermarking schemes include 
steganographic communication and marking digital works for copyright protection.
\fi

With respect to digital representations of media objects that are intended to be
rendered for human performances, such as music, videos, and images, there is
a well-established literature on digital watermarking schemes and even
well-developed models for such schemes
(e.g., see Hopper {\it et al.}~\cite{hmw-watermarking-07}).
Typically, such watermarking schemes take advantage of the fact that rendered
works have many possible representations with almost imperceptibly
different renderings from the perspective of a human viewer or listener.

In this paper, we are inspired by recent systems work 
on \emph{graph watermarking} by 
\ifFull
Zhao {\it et al.}~\cite{DBLP:journals/corr/ZhaoLZZZ15,Zhao:2015},
\else
Zhao {\it et al.}~\cite{Zhao:2015},
\fi
who propose a digital watermarking scheme for graphs,
such as social networks, protein-interaction graphs, etc., which are
to be used for commercial, entertainment, or scientific purposes.
This work by Zhao {\it et al.}~presents a system and experimental results
for their particular method for performing
graph watermarking, but it is lacking in formal security
and algorithmic foundations.
For example, Zhao {\it et al.}~do not provide formal 
proofs for circumstances under which graph watermarking is undetectable
or when it is computationally feasible.
Thus, as complementary work to the 
systems results of Zhao {\it et al.},
we are interested in the present paper in providing models 
and algorithms for graph watermarking, in the spirit of the
watermarking model provided by
Hopper {\it et al.}~\cite{hmw-watermarking-07} 
for media files.
In particular, we are interested in providing a framework
for identifying when graph watermarking is secure and 
computationally feasible.

\subsection{Additional Related Work}
Under the term ``graph watermarking,''
there is some additional work, although it is not actually
for the problem of graph watermarking as we are defining it.
For instance, there is a
line of research involving software watermarking using
graph-theoretic concepts and encodings.
In this case,
the object being marked is a piece of software and the goal of
a ``graph watermarking''
scheme is to create a graph, $G$, from a message, $m$,
and
then embed $G$ into the control flow of a piece of software, $S$, to mark
$S$.
Examples of such work include pioneering work by
Collberg and Thomborson~\cite{Collberg:1999:SWM:292540.292569},
as well as subsequent work by Venkatesan, Vazirani, and Sinha~\cite{vvs-ih-01}
and Collberg {\it et al.}~\cite{ckct-gt-03}.
\ifFull
(See also Chen {\it et al.}~\cite{c5283190} and 
Bento {\it et al.}~\cite{bbm-tpe-13},
as well as a survey by Hamilton and Danicic~\cite{hamilton2010survey}.)
\fi
This work on software watermarking differs from the graph watermarking
problem we study in the present paper, however, because
in the graph watermarking problem we study an input graph 
is provided and we want to alter it to add a mark.
In the graph-based software watermarking problem, 
a graph is instead created from a
message to have a specific, known structure, such as being a permutation
graph, and then that graph is embedded into the control flow of the piece of
software.

A line of research that is more related to the graph watermarking
problem we study is anonymization and de-anonymization for social
\ifFull
networks (e.g., see~\cite{Backstrom:2011:WAT:2043174.2043199,b4497459,%
Hay:2008:RSR:1453856.1453873,k6320456,Liu:2008:TIA:1376616.1376629,%
n5207644,w5504716}).
\else
networks.
\fi
One of the closest examples of such prior work is
by Backstrom, Dwork, and Kleinberg~\cite{Backstrom:2011:WAT:2043174.2043199},
who show how to introduce a small set of ``rogue'' vertices into a social
network and connect them to each other and to 
other vertices so that if that same network
is approximately replicated in another setting it is easy to
match the two copies.
Such work differs from graph watermarking, however, because the set of rogue
vertices are designed to ``stand out'' from the rest of the graph rather than
``blend in,'' and it may in some cases be relatively easy for an adversary to
identify and remove such rogue vertices.
\ifFull
Also, we would ideally prefer graph watermarking schemes that make small
changes to the adjacencies of existing vertices rather than mark a graph by
introducing new vertices, since in some applications
it may not be possible to introduce new vertices into a graph that we wish
to watermark.
\fi
In addition to this work,
also of note is work by Narayanan and Shmatikov~\cite{n5207644},
who study the problem of approximately matching two social networks without
marking, as well as the work on Khanna and Zane~\cite{Khanna}
for watermarking road networks
by perterbing vertex positions (which is a marking method outside the scope
of our approach).

\ifFull
Our approach to graph watermarking is also necessarily related to the problem of
graph isomorphism and its approximation
(e.g., see~\cite{Babai15,doi:10.1137/0209047,Czajka200885,%
EPFL-ARTICLE-207759,kobler2012graph,JGT:JGT3190010410}).
In the graph isomorphism problem, we are given two $n$-vertex
graphs, $G$ and $H$, and 
asked if there is a mapping, $\chi$, 
of vertices in $G$ to vertices in $H$ such that
$(v,w)$ is an edge in $G$ if and only if $(\chi(v),\chi(w))$ is an edge in
$H$.
While the graph isomorphism problem is ``famous'' for having 
an uncertain, but unlikely~\cite{Babai15}, 
with respect to being NP-complete,
extensions to subgraph isomorphism and graph edit distance
are known to be NP-complete (e.g., see~\cite{garey}).

There is, of course, also prior work on digital watermarking in general.
For background on such work, we refer the interested reader
to any of the existing surveys, framework papers, or books
(e.g., see~\cite{cox2007digital,hmw-watermarking-07,%
katzenbeisser2000information,shih2007digital}).
\fi

\subsection{Our Results}
In this paper, we introduce a general 
graph watermarking framework that is based on the use of key generation, marking,
and identification functions, as well as a hypothetical watermarking security experiment
(which would be performed by an adversary).
We define these functions in terms of graphs taken over random families of 
graphs, which allows us to quantify situations in which graph watermarking
is provably feasible.

We also provide some graph watermarking schemes as examples of our framework, defined
in terms of the classic Erd\H{o}s-R\'{e}nyi random-graph model and a
random power-law graph model.
Our schemes extend and build upon 
previous results on graph
isomorphism for these graph families, which may be of independent interest.
In particular, we design simple marking schemes for these random graph families
based on simple edge-flipping
strategies involving high- and medium-degree vertices.
Analyzing the correctness of our schemes is quite nontrivial, however, and 
our analysis and proofs involve intricate probabilistic 
\ifFull
arguments.
\else
arguments,
some of which we include in an appendix in the ePrint version of this
paper.
\fi
We provide an analysis of our scheme against
adversaries that can themselves flip edges in order to defeat our 
mark identification algorithms.
In addition, we provide experimental validation
of our algorithms, showing that our edge-flipping scheme can succeed
for a graph
without specific knowledge of the parameters of its deriving graph family.
\ifFull
We also conducted experiments to fit real-world networks to the random 
power-law graph model, which gave results that showed that the model 
was generally a good fit for the networks tested but the learned values 
did not fall into the range needed for our scheme.
\fi

% !TEX root = paper.tex 

\section{Our Watermarking Framework}
\label{sec:framework}

\ifFull
We begin by presenting a general framework for graph watermarking, which
differs from the general model
of Hopper {\it et al.}~\cite{hmw-watermarking-07}, but is
similar in spirit.
\fi

Suppose we are given an undirected graph, $G=(V,E)$, that we wish to
mark.
To define the security of a watermarking scheme for $G$, 
$G$ must come from a family of graphs with some degree
of entropy~\cite{DBLP:journals/corr/ZhaoLZZZ15}.
We formalize this by assuming a probability distribution
$\mathcal{D}$ over the family $\mathcal{G}$ of graphs from which $G$ 
is taken.

\begin{definition}
A \emph{graph watermarking scheme} is a 
tuple 
$(\keygen, \wmark, \identify)$ over a set, $\mathcal{G}$,
of graphs where
\begin{compactitem}
\item
$\keygen: \N\times \N \to \Aux$ is a private key generation function,
such that $\keygen(\ell,n)$ is a list of $\ell$ (pseudo-)random graph elements,
such as vertices and/or vertex pairs, defined over a graph of $n$ vertices.
These candidate locations for marking are defined independent
of a specific graph; that is, vertices in $\Aux$ are identified 
simply by the numbering from $1$ to $n$.
For example, $\keygen(\ell,n)$ could be a small random graph, $H$,
and some random edges to connect $H$ to a larger
input graph~\cite{DBLP:journals/corr/ZhaoLZZZ15},
or $\keygen(\ell,n)$ could be a set of vertex pairs in an
input graph that form candidate locations for marking.
\item
  $\wmark: \Aux \times \mathcal{G} \to \N \times \mathcal{G}$ 
takes a private key $z$ generated by $\keygen$, and a specific graph $G$
from $\mathcal{G}$, 
and returns a pair, $S = (\id, H)$,
such that $\id$ is a unique identifier for $H$ and $H$ is the graph obtained by adding the mark determined by $\id$ to $G$ in the location determined determined by the private key $z$. $\wmark$ is called every time a different marked copy needs to be produced, with the $i$-th copy being denoted by $S_i = (\id_i, H_i)$. Therefore, the unique identifiers should be thought of as being generated randomly. To associate a marked graph $H_i$ with the user who receives it, the watermarking scheme can be augmented with a table storing user name and unique identifiers. Alternatively, the identifiers can be generated pseudo-randomly as a hash of a private key provided by the user.
\item
  $\identify: \Aux \times \mathcal{G} \times \N^k \times \mathcal{G} \to \N \cup \set{\perp}$ 
takes a private key from $\Aux$, 
the original graph, $G$, 
$k$ identifiers of previously-marked copies of 
$G$, and a test graph, $G'$, and it returns
 the identifier, $\id_i$, of the watermarked graph
that it is identifying as a match for $G'$. It may also return $\perp$,
as an indication of failure,
if it does not identify any of the graphs $H_i$ as a match for $G'$.
\end{compactitem}
In addition, in order for a watermarking scheme to be effective, we require 
that \emph{with high probability}\footnote{Or ``\emph{whp},'' 
that is, with probability at least $1 - O(n^{-a})$, for some $a>0$.}
over the graphs from $\mathcal{G}$ and $k$ output pairs, $S_1,\ldots,S_k$ 
of $\wmark(z, G)$, 
for any $(\id, G')=S_i$, we have $\identify(z, G, \id_1,\ldots,\id_k, G') = \id$.
\end{definition}

\cref{alg:security} shows a hypothetical 
security experiment for a watermarking scheme with respect
to an adversary, $A : \mathcal{G} \to \mathcal{G}$, who is trying to
defeat the scheme.
Intuitively, in the hypothetical experiment, we generate a key $z$, choose
a graph $G$, from family $\mathcal{G}$ according to distribution
$\mathcal{D}$ (as discussed above), 
and then generate $k$ marked graphs according to our scheme (for some set
of $k$ messages).
Next, we randomly choose one of the marked graphs, $G'$, and communicate 
it to an adversary.
The adversary then outputs a graph $G_A$ that is similar to $G'$
where his goal is to cause our identification algorithm to fail on~$G_A$.

\begin{algorithm}[h]
$\mathsf{experiment}(A, k, \ell, n)$:
\begin{compactenum}
\item $z \gets \keygen(\ell,n)$
\item $G \gets_{\mathcal{D}} \mathcal{G}$
\item $S_i \gets \mathsf{mark}(z, G)$, for $i=1,\ldots,k$
\item randomly choose $S_i=(\id, G')$ from $\{S_1,\ldots,S_k\}$
\item $G_A \gets A(G')$
\end{compactenum}
\caption{Hypothetical Watermarking Security Experiment}
\label{alg:security}
\end{algorithm}

In order to characterize differences between graphs, we assume
a similarity measure $\dist: \mathcal{G} \times \mathcal{G} \to \R$, 
defining the distance between graphs in family $\mathcal{G}$.
We also include
a similarity threshold $\theta$, 
that defines
the advantage of an adversary performing
the experiment in \cref{alg:security}.
Specifically,
the \emph{advantage} of an adversary, $A: \mathcal{G} \to \mathcal{G}$
who  is trying to defeat our watermarking scheme is
\[
  \prob{\dist(G, G_A) < \theta \text{ and } \identify(z, G, \id_1,\ldots,\id_k, G_A) \neq \id}.
\]
The watermarking scheme is $(\mathcal{D}, \dist, \theta, k, \ell)$-secure against adversary $A$ if the similarity threshold is $\theta$ and $A$'s advantage 
is \emph{polynomially negligible} (i.e., is $O(n^{-a})$ for some $a > 0$).

Examples of adversaries could include the following:
\begin{compactitem}
\item
\emph{Arbitrary edge-flipping adversary}: a malicious adversary
who can arbitrarily flip edges in the graph. That is,
the adversary adds an edge if it is not already there, and removes it otherwise.
\item
\emph{Random edge-flipping adversary}: an adversary who independently flips 
each edge with a given probability.
\item
\emph{Arbitrary adversary}:
a malicious adversary
who can arbitrarily add and/or remove vertices and flip edges in the graph.
\item
\emph{Random adversary}: an adversary who 
independently adds and/or removes vertices with a given probability and
independently flips each edge with a given probability.
\end{compactitem}
\ifFull
One could also imagine other types of adversaries, as well, such as a random
adversary who is limited in terms of the numbers or types of edges or vertices
that he can change.
\fi

\subsection{Random graph models}
As defined above, a graph watermarking scheme requires that graphs to be marked
come from some distribution.  In this paper, we consider two 
families of random graphs---the classic
Erd\H{o}s-R\'{e}nyi model and a random power-law graph model---which should
capture large classes of applications where graph watermarking would be of
interest.

\begin{definition}[The Erd\H{o}s-R\'{e}nyi model]
A \emph{random graph} $G(n,p)$ is a graph with $n$ vertices, where each of the $\binom{n}{2}$ possible edges appears in the graph independently with probability $p$.
\end{definition}

\begin{definition}[The random power-law graph model, \S5.3 of \cite{ChungLu2006}]
\label{def:rplg}
Given a sequence $\vector{w} = (w_1, w_2, \dots, w_n)$, such that $\max_i w_i^2 < \sum_k w_k$, the \emph{general random graph} $G(\vector{w})$ is defined by labeling the vertices $1$ through $n$ and choosing each edge $(i, j)$ independently from the others with probability $p[i,j] = \rho w_i w_j$, where $\rho = 1/ \sum_j w_j$.

We define a \emph{random power-law graph} $G(\vector{w}^\gamma)$ parameterized by the maximum degree $m$ and average degree $w$. Let $w_i = ci^{-1/(\gamma - 1)}$ for values of $i$ in the range between $i_0$ and $i_0 + n$, where
\begin{equation}
\label{eq:rplg}
c = \frac{\gamma - 2}{\gamma -1} wn^{\frac{1}{\gamma - 1}}, \qquad
i_0 = n \Round{\frac{w(\gamma - 2)}{m(\gamma - 1)}}^{\gamma - 1}.
\end{equation}
This definition implies that each edge $(i,j)$ appears with probability
\begin{equation}
\label{eq:powerlaw-probs}
P[i,j] = K_0 \Round{n^{\gamma - 3}ij}^{-\frac{1}{\gamma - 1}}, \quad \text{where } K_0 \defeq \Round{\frac{\gamma - 2}{\gamma - 1}}^2 w.
\end{equation}
\end{definition}

\ifFull
As we show in the following proposition, this model does indeed have
a power-law degree distribution.

\begin{proposition}
\label{prop:expected-degree}
In the random power-law graph $G(\vector{w}^\gamma)$, the expected number of vertices with degree $k$ is between $Cn/k^{\gamma}$ and $Cn/(k+1)^{\gamma}$ where $C = \Round{w(\gamma - 2)}^{\gamma - 1}/(\gamma - 1)^{\gamma - 2}$.
\end{proposition}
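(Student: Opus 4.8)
The plan is to exploit the fact that in $G(\vector{w}^\gamma)$ the expected degree of a vertex is governed by its weight $w_i = c\,i^{-1/(\gamma-1)}$. Summing \cref{eq:powerlaw-probs} over all potential neighbors gives $\sum_{j\neq i}P[i,j] = \rho w_i\sum_{j\neq i}w_j = w_i(1-\rho w_i)$, which equals $w_i$ up to the term $\rho w_i^2 < 1$ that the hypothesis $\max_i w_i^2 < \sum_k w_k$ keeps small. Treating $w_i$ as the expected degree of vertex $i$, the weights are strictly decreasing in $i$, so counting vertices whose degree falls in a prescribed range reduces to counting indices $i$ in an interval obtained by inverting $w_i \ge k$.

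First I would count the vertices of expected degree at least $k$. The inequality $w_i \ge k$ is equivalent to $i \le (c/k)^{\gamma-1} =: f(k)$, and since the indices run over $[i_0, i_0+n]$ with $f(k)\ge i_0$ precisely when $k\le m$ (recall that $w_{i_0}=m$, equivalently $f(m)=i_0$ by \cref{eq:rplg}), the number of such vertices is $f(k)-i_0$. Subtracting the analogous count for $k+1$, the boundary term $i_0$ cancels and the number of vertices with expected degree in $[k,k+1)$ is $f(k)-f(k+1) = c^{\gamma-1}\bigl(k^{-(\gamma-1)}-(k+1)^{-(\gamma-1)}\bigr)$. A direct substitution of $c$ from \cref{eq:rplg} shows $c^{\gamma-1} = Cn/(\gamma-1)$, which is exactly where the stated constant $C$ enters; verifying this algebraic identity is the one piece of bookkeeping to get right.

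Next I would bound the power difference by the mean value theorem applied to $g(x)=x^{-(\gamma-1)}$, whose derivative is $-(\gamma-1)x^{-\gamma}$. This yields $g(k)-g(k+1) = (\gamma-1)\xi^{-\gamma}$ for some $\xi\in(k,k+1)$, and since $x^{-\gamma}$ is decreasing we get $(\gamma-1)(k+1)^{-\gamma} < g(k)-g(k+1) < (\gamma-1)k^{-\gamma}$. Multiplying through by $c^{\gamma-1}=Cn/(\gamma-1)$ collapses the factors of $(\gamma-1)$ and places the count strictly between $Cn/(k+1)^\gamma$ and $Cn/k^\gamma$, which is the claimed range.

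I expect the mean-value step and the constant-matching to be essentially routine, so the main obstacle is the boundary bookkeeping: I am really counting integer indices in $(f(k+1), f(k)]$, so the exact integer count differs from $f(k)-f(k+1)$ by at most one, and I must restrict attention to $k$ in the bulk (so that both $f(k)$ and $f(k+1)$ lie in $[i_0,i_0+n]$) for the clean cancellation of $i_0$ to hold. The core difficulty is therefore conceptual rather than technical: confirming that the algebra for $c^{\gamma-1}$ reproduces $C$ and that the relevant indices stay within the range $[i_0, i_0+n]$.
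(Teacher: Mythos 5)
Your proposal matches the paper's proof in essentials: both invert the weight function $i(k)=(c/k)^{\gamma-1}$, apply the mean value theorem to the power difference, and verify that $(\gamma-1)c^{\gamma-1}=Cn$, interpreting ``vertices of degree $k$'' as vertices whose weight lies in $[k,k+1)$. Your extra care about the $\pm 1$ integer-rounding and the cancellation of $i_0$ at the boundaries is sound bookkeeping that the paper's one-line version elides, but it does not change the argument.
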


\begin{proof}
The function $i(k) = (c/k)^{\gamma - 1}$ relating the index of a vertex to its expected degree $k=w_i$ is convex and decreasing. By the mean value theorem, the number $\Delta i$ of indices $j$ such that $k \leq w_j < k+1$ satisfies
\[
|i'(k)| \leq \frac{\Delta i}{(k+1) - k} = \Delta i \leq |i'(k+1)|.
\]
Now the derivative of $i(k)$ is $-Cn/k^{\gamma}$. Noting that $\Delta i$ is the expected number of vertices of degree $k$, the result is proven.
\end{proof}
\fi

\subsection{Graph watermarking algorithms}
We discuss some instantiations of the graph watermarking framework
defined above. Unlike previous watermarking or de-anonymization
schemes that add vertices~\cite{Backstrom:2011:WAT:2043174.2043199,DBLP:journals/corr/ZhaoLZZZ15}, we describe an effective and
efficient scheme based solely on edge flipping.  Such an approach
would be especially useful for applications where it could be infeasible
to add vertices as part of a watermark.

Our scheme does not require adding labels to the vertices or
additional objects stored in the graph for
identification purposes. Instead, we simply rely on the structural
properties of graphs for the purposes of marking.  In particular, we focus on
the use of vertex degrees, that is, the number of edges incident on each
vertex. We identify high and medium degree vertices as candidates for finding 
edges that can be flipped in the course of marking. The specific 
degree thresholds for what we mean by
``high-degree'' and ``medium-degree'' depend on the graph family, however, 
so we postpone defining these notions precisely until our analysis sections.

Algorithms providing an example implementation of our graph watermarking
scheme are shown in \cref{alg:scheme}.
The $\keygen$ algorithm randomly selects a set of 
candidate vertex pairs for flipping, from among the high- and 
medium-degree vertices,
with no vertex being incident to more than a parameter $t$ of candidate
pairs.
We introduce a 
procedure, $\wlabel(G)$,
which labels high-degree vertices by their degree ranks and 
each medium-degree vertex, $w$,
by a bit vector identifying its high-degree adjacencies.
This bit vector has a bit for each high-degree vertex, which is $1$ for neighbors of $w$ and $0$ for non-neighbors.
The algorithm
$\wmark(z, G)$,
takes a random set of candidate edges and a graph, $G$,
and it flips the corresponding edges in $G$ according to a resampling
of the edges using the distribution $\mathcal{D}$.
The algorithm,
\textsf{approximate-isomorphism}$(G,H)$,
returns a mapping of the high- and medium-degree vertices in $G$
to matching high- and medium-degree vertices in $H$, if possible.
The algorithm,
$\identify(z, G, \id_1, \ldots, \id_k, H)$,
uses the approximate isomorphism algorithm to match up high- and medium-degree
vertices in $G$ and $H$, and then it extracts the bit-vector from this matching
using $z$.

\begin{algorithm}
$t$: the maximum number of flipped edges that can be adjacent to the same vertex.
$\keygen(\ell, n)$:\\
\begin{compactenum}
\item Let $x$ denote the total number of high- and medium-degree vertices
\item $X = \varset{(u,v)}{1\leq u < v \leq x}$
\item Let $z$ be a list of $\ell$ pairs randomly sampled (without replacement) 
from $X$ such that no end vertex appears more than $t$ times
\item return $z$
\end{compactenum}
$\wlabel(G)$:
\begin{compactenum}
\item sort the vertices in decreasing order by degree and identify the high- and medium-degree vertices
\item if the degrees of high-degree vertices are not unique, return failure
\item label each high-degree vertex with its position in the vertex sequence
\item label each medium-degree vertex with a bit vector encoding its high-degree adjacencies
\item if the bit vectors are not unique, return failure
\item otherwise, return the labelings
\end{compactenum}
$\wmark(z, G)$:\\
\begin{compactenum}
\item $S = \emptyset$
\item $V$ is the set of high- and medium-degree vertices of $G$, sorted lexicographically by their labels given by $L = \wlabel(G)$
\item generate an $\ell$-bit string $\id$ where each bit $i$
is independently set to 1 
with probability $p_{z[i]}$, where $p_{z[i]}$ is the probability of 
the edge $z[i]$ in $\mathcal{D}$
\item let $H$ be a copy of $G$
\item for $j$ from 1 to $\ell$:
\item \quad $(u,v) = z[j]$
\item \quad if $\id[j]$ is 1:
\item \quad \quad insert edge $(V[u], V[v])$ in $H$
\item \quad else:
\item \quad \quad remove edge $(V[u], V[v])$ from $H$ 
\item return $(\id, H)$ 
\end{compactenum}
\textsf{approximate-isomorphism}$(G,H)$:
\begin{compactenum}
\item call $\mathsf{label}(G)$ and $\mathsf{label}(H)$, returning failure if either of these fail.
\item match each of $G$'s high-degree vertices with the vertex in $H$ with the same label.
\item match each of $G$'s medium-degree vertices with the vertex in $H$ whose label is closest in Hamming distance.
\item if $H$ has a vertex that is matched more than once, return failure.
\item otherwise, return the (partial) vertex assignments between $G$ and $H$.
\end{compactenum}
$\identify(z, G, \id_1, \ldots, \id_k, H)$:
\begin{compactenum}
\item
find an \textsf{approximate-isomorphism}($G,H$), returning $\perp$ if failure occurred at any step.
\item $V$ is the set of high- and medium-degree vertices of $G$, sorted lexicographically by their labels given by $L = \wlabel(G)$
\item $V'$ is the set of vertices of $H$ identified as corresponding to those in $V$, in that same order.
\item $\id$ is an empty bit string
\item for $(u,v)$ in $z$ (from left to right):
\item \quad $b = 1$ iff there is an edge between $V'[u]$ and $V'[v]$ in $H$. 
\item \quad append $b$ to $\id$
\item return among the $\id_i$'s the one closest to $\id$
\end{compactenum}
\caption{Watermarking scheme for random graphs.}
\label{alg:scheme}
\end{algorithm}

As mentioned above, we also need a notion of distance for graphs. We use two
different such notions.
The first is the graph edit distance, which is the minimum number of edges 
needed to flip to go from one graph to another.
The second is vertex distance, which intuitively is an edge-flipping
metric localized to vertices.

\begin{definition}[Graph distances]
Let $\mathcal{G}$ be the set of graphs on $n$ vertices. If $G, H \in \mathcal{G}$, define $\Pi$ as the set of bijections between the vertex sets $V(G)$ and $V(H)$. Define the \emph{graph edit distance} $\dist_e: \mathcal{G} \times \mathcal{G} \to \N$ as 
\[
\dist_e(G, H) = \min_{\pi \in \Pi} \abs{E(G) \oplus_\pi E(H)},
\]
where $\oplus_\pi$ is the symmetric difference of the two edge sets under correspondence $\pi$. Define the \emph{vertex distance} $\dist_v: \mathcal{G} \times \mathcal{G} \to \N$ as
\[
\dist_v(G, H) = \min_{\pi \in \Pi} \max_{v \in V(G)} \abs{E(v) \oplus_\pi E(\pi(v))},
\]
where $E(v)$ is the set of edges incident to $v$.
\end{definition}

\section{Identifying High- and Medium-Degree Vertices}
We begin analyzing our proposed graph watermarking scheme by showing
how high- and medium-degree vertices can be identified under our two
random graph distributions. 
\ifFull
We begin with some technical results related to graph isomorphism that form the basis of our watermarking approach, with
the goal of determining the conditions under 
which a vertex of a random graph can be identified with high probability, 
either by its 
degree (if the degree is high) or by its set of high-degree 
neighbors (if it has medium degree).
\fi
We ignore low-degree vertices: their information content 
and distinguishability are low, and they are not used by our example scheme.
\ifFull
Because our results on vertex identifiability are used in our graph watermarking scheme, 
we also determine how robust these identifications are, 
based on how well-separated the vertices are by their degrees. 
\fi

We first find a threshold number $k$ such that the $k$ vertices with highest degree are likely to have distinct and well-separated degree values. We call these $
k$ vertices the \emph{high-degree} vertices.
Next, we look among the remaining vertices for those that are well-separated in 
terms of their high-degree neighbors. 
Specifically, the (high-degree) \emph{neighborhood distance} between two 
vertices is the number of high-degree vertices which are connected to 
exactly one of the two vertices. 
Note that we will omit the term ``high-degree'' in 
``high-degree neighborhood distance'' from now on, as it will always be implied.

In the Erd\H{o}s-R\'{e}nyi model, we show that all vertices that are not 
high-degree nevertheless have well-separated high-degree neighborhoods whp. 
In the random power-law graph model, however, there will be many 
lower-degree vertices whose high-degree neighborhoods cannot be separated. 
Those that have well-separated high-degree neighborhoods with high probability form the
medium-degree vertices, and the rest are the low-degree vertices.

For completeness, we include the following well-known 
Chernoff concentration bound, which we will refer to time and again. 

\begin{lemma}[Chernoff inequality~\cite{ChungLu2006}]
\label{lem:chernoff}
  Let $X_1, \dots, X_n$ be independent random variables with 
  \[
    \prob{X_i = 1} = p_i, \qquad \prob{X_i = 0} = 1 - p_i.
  \]
  We consider the sum $X = \sum_{i=1}^n X_i$, with expectation $\expec{X} = \sum_{i = 1}^n p_i$. Then
  \begin{align*}
    \prob{X \leq \expec{X} - \lambda} &\leq e^{-\frac{\lambda^2}{2\expec{X}}}, \\
    \prob{X \geq \expec{X} + \lambda} &\leq e^{-\frac{\lambda^2}{2\expec{X}+\lambda/3}}.
  \end{align*}
\end{lemma}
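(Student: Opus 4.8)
Since the statement is the textbook Chernoff/Bernstein concentration bound attributed to \cite{ChungLu2006}, the plan is to give the standard exponential-moment argument. Write $\mu = \expec{X}$. For the upper tail, fix any $s > 0$ and apply Markov's inequality to the nonnegative random variable $e^{sX}$:
\[
\prob{X \geq \mu + \lambda} = \prob{e^{sX} \geq e^{s(\mu + \lambda)}} \leq e^{-s(\mu+\lambda)}\,\expec{e^{sX}}.
\]
By independence the moment generating function factors, $\expec{e^{sX}} = \prod_{i=1}^n \expec{e^{sX_i}}$, and for each Bernoulli term $\expec{e^{sX_i}} = 1 + p_i(e^s - 1) \leq \exp\!\big(p_i(e^s - 1)\big)$ using $1 + x \leq e^x$. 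Multiplying over $i$ gives $\expec{e^{sX}} \leq \exp\!\big(\mu(e^s - 1)\big)$, so that
\[
\prob{X \geq \mu + \lambda} \leq \exp\!\big(\mu(e^s - 1) - s(\mu + \lambda)\big).
\]

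The remaining work is to optimize the exponent over $s$ and then simplify. Setting $s = \ln(1 + \lambda/\mu)$ minimizes the right-hand side and yields the classical multiplicative bound $\big(e^{\delta}/(1+\delta)^{1+\delta}\big)^{\mu}$ with $\delta = \lambda/\mu$. To recover the clean form it then suffices to invoke the analytic inequality $(1+\delta)\ln(1+\delta) - \delta \geq \delta^2/(2 + \tfrac{2}{3}\delta)$, valid for all $\delta \geq 0$, which turns the exponent into $-\lambda^2/\big(2(\mu + \tfrac{\lambda}{3})\big)$ and hence into the stated upper-tail bound. For the lower tail I would repeat the argument with $s < 0$ (equivalently, bound $\expec{e^{-sX}}$ for $s > 0$); the optimal choice $s = -\ln(1-\delta)$ produces $\big(e^{-\delta}/(1-\delta)^{1-\delta}\big)^{\mu}$, and the sharper one-sided inequality $\delta + (1-\delta)\ln(1-\delta) \geq \delta^2/2$ on $[0,1)$ gives the cleaner exponent $-\lambda^2/(2\mu)$ with no correction term.

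The main obstacle is not the probabilistic setup, which is entirely routine, but verifying the two elementary-but-fiddly analytic inequalities that convert the optimized multiplicative bounds into the stated denominator forms; these are handled by comparing Taylor expansions of the two sides or by checking the sign of the derivative of their difference, and they are exactly where the asymmetry between the tails (the extra $\lambda/3$ correction on the upper tail versus none on the lower tail) originates. Since the statement is quoted directly from \cite{ChungLu2006}, an acceptable alternative to reproving these two inequalities is simply to cite that reference for them.
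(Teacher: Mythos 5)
The paper offers no proof of this lemma at all: it is quoted, with citation, from Chung and Lu~\cite{ChungLu2006} as a known concentration bound, so there is no ``paper route'' to compare against. Your exponential-moment derivation is the standard proof of it, and the probabilistic part is correct in every step: Markov's inequality applied to $e^{sX}$, factorization of the moment generating function by independence, the bound $1+x\le e^x$, the optimal choices $s=\ln(1+\delta)$ and $s=-\ln(1-\delta)$, and the two calculus inequalities, including the correct observation that the one-sided sharpening $\delta+(1-\delta)\ln(1-\delta)\ge\delta^2/2$ on $[0,1)$ is what removes the correction term from the lower tail.

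One point needs to be fixed rather than glossed: your conversion inequality $(1+\delta)\ln(1+\delta)-\delta\ge\delta^2/(2+\tfrac{2}{3}\delta)$ yields the upper-tail exponent $-\lambda^2/\bigl(2(\expec{X}+\lambda/3)\bigr)=-\lambda^2/(2\expec{X}+2\lambda/3)$, whereas the lemma as typeset reads $-\lambda^2/(2\expec{X}+\lambda/3)$, and your closing claim ``hence into the stated upper-tail bound'' silently identifies these two distinct quantities. What you proved is in fact the statement as it appears in Chung--Lu; the literal typeset form is strictly stronger and cannot be reached by this argument, since for small $\delta$ one has $(1+\delta)\ln(1+\delta)-\delta=\delta^2/2-\delta^3/6+O(\delta^4)$ while $\delta^2/(2+\delta/3)=\delta^2/2-\delta^3/12+O(\delta^4)$, so the optimized Chernoff exponent falls short of the claimed one by roughly $\expec{X}\delta^3/12$. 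Because the Chernoff exponent is tight up to polynomial factors for binomials with small $p_i$ (Poisson-like regimes), the literal form actually fails, e.g., with $\lambda=\expec{X}^{3/4}$ and $\expec{X}\to\infty$; the paper's rendering has simply dropped a pair of parentheses. So your proof is the right one, but you should say explicitly that it establishes the corrected denominator $2(\expec{X}+\lambda/3)$ and that the printed form is a typo, rather than asserting agreement. (The paper's applications of the lemma all use it with generous constant slack, so the distinction is harmless downstream.)
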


\subsection{Vertex separation in the Erd\H{o}s-R\'{e}nyi model}
Let us next consider vertex separation results for the
classic Erd\H{o}s-R\'{e}nyi random-graph model.
Recall that in this model, each edge is chosen independently with probability
$p$.

\begin{definition}
Index vertices in non-increasing order by degree. Let $d_i$ represent the $i$-th highest degree in the graph. Given $h= O(n)$, we say that a vertex is \emph{high-degree} with respect to $d_h$ if it has degree at least $d_h$. Otherwise, we say that the vertex is \emph{medium-degree}. 
\ifFull
We just say high-degree when the value of $h$ is understood from context.
\fi
\end{definition}

Note that in this random-graph model, there are no low-degree vertices.

\begin{definition}
A graph is \emph{$(d,d')$-separated} if all high-degree vertices differ in their degree by at least $d$ and all medium-degree vertices are neighborhood distance $d'$ apart.\end{definition}

Note: this definition depends on how high-degree or medium-degree vertices are defined and will therefore be different for the random power-law graph model.

\begin{lemma}[Extension of Theorem 3.15 in \cite{Bollobas1998}]
\label{lem:erdos-deg-sep}
Suppose $m = o(pqn/\log n)^{1/4}$, $m \to \infty$, and $\alpha(n) \to 0$. Then with probability
\[
1 - m \alpha(n) - 1/\Square{m \Round{\log (n/m)}^2},
\]
$G(n,p)$ is such that
\[
d_i - d_{i+1} \geq \frac{\alpha(n)}{m^2} \Round{\frac{pqn}{\log n}}^{1/2} \text{ for every } i < m,
\]
where $q = 1 - p$.
\end{lemma}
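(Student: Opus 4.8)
The plan is to follow and sharpen the proof of Theorem~3.15 in Bollob\'{a}s's \emph{Random Graphs}, whose conclusion is exactly this ``ordered-degree separation'' statement but phrased only in the qualitative ``almost every $G(n,p)$'' form; the task is to rerun that argument while tracking the failure probability explicitly. First I would recall the probabilistic structure of the degree sequence: each vertex $v$ has $\deg(v) \sim \mathrm{Bin}(n-1,p)$, with mean $\approx pn$ and standard deviation $\approx (pqn)^{1/2}$, and the degrees of distinct vertices are only weakly dependent (they share a single potential edge). Using the binomial upper tail together with the normal approximation, I would locate the degree threshold $K = pn + (x-\epsilon)(pqn)^{1/2}$, with $\epsilon = \Round{\log(n/m)}^{1/2}$ and $x$ chosen so that the expected number of vertices of degree at least $K$ is comparable to $m$. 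This pins down the window in which the top $m$ order statistics $d_1 > \cdots > d_m$ live.

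Next I would split the failure event into two parts: (i) the $m$-th largest degree is too small, $d_m < K$; and (ii) some consecutive gap is too small, $d_i - d_{i+1} < g$ for some $i < m$, where $g = \frac{\alpha(n)}{m^2}\Round{pqn/\log n}^{1/2}$ is the target separation. For (i) I would count $Y = \abs{\set{v : \deg(v) \geq K}}$ and apply Chebyshev's inequality; near-independence of degrees gives $\mathrm{Var}(Y) = O(\expec{Y})$, and the choice of $\epsilon$ makes $\expec{Y}$ large enough that $\prob{d_m < K} = \prob{Y < m}$ is at most $1/\Square{m\Round{\log(n/m)}^2}$. This is precisely the step where Bollob\'{a}s invokes Chebyshev, so reading off the decay rate rather than merely ``$\to 0$'' yields the quantitative bound. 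For (ii), for a fixed rank $i$ I would bound the probability that two vertices with degrees in the top window fall within $g$ of each other by the density of the (nearly normal) degree distribution across that window; this gives $\prob{d_i - d_{i+1} < g} = O(\alpha(n))$, and a union bound over the fewer than $m$ ranks contributes $m\alpha(n)$.

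A final union bound over the two modes yields total failure probability at most $m\alpha(n) + 1/\Square{m\Round{\log(n/m)}^2}$, the stated bound. The hypotheses play natural roles: $m \to \infty$ makes the normal approximation accurate at the relevant tail scale, and $m = o\Round{pqn/\log n}^{1/4}$ is the regime in which Bollob\'{a}s's order-statistic analysis applies, keeping the top-$m$ window narrow and the top degrees distinct and well spread. Independently, since $\alpha(n) \to 0$ and $m \to \infty$, the target separation $g$ stays much smaller than the typical spacing $\approx (pqn)^{1/2}/m$ between consecutive order statistics at the bottom of the window, which is what makes the close-pair probability in (ii) genuinely $O(\alpha(n))$.

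The main obstacle is the quantitative second-moment work underlying both (i) and (ii): because the binomial degrees are not independent, I must control the covariance contributions from edge-sharing between pairs of vertices carefully enough to obtain the explicit rates $1/\Square{m\Round{\log(n/m)}^2}$ and $O(\alpha(n))$ rather than a bare $o(1)$. Getting the constants and the exact dependence on $m$, $\epsilon$, and the window width to line up so that the two error terms emerge in the stated form is the delicate part; everything else is bookkeeping on the binomial-to-normal approximation.
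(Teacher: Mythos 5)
Your proposal follows essentially the same route as the paper's proof: the same threshold $K = pn + (x-\epsilon)(pqn)^{1/2}$ with $\epsilon = \Round{\log(n/m)}^{1/2}$, the same decomposition of the failure event into $d_m < K$ versus a small consecutive gap, the same quantification of Bollob\'{a}s's Chebyshev step (the paper cites it as Theorem~3.12 of~\cite{Bollobas1998}) to get the $1/\Square{m\Round{\log(n/m)}^2}$ term, and the same $O(\alpha(n))$ per-gap bound with a union bound over the $m$ ranks. Both your write-up and the paper's leave the detailed second-moment bookkeeping to Bollob\'{a}s's argument, so the two proofs match in structure and in where they defer to the reference.
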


\begin{proof}
\ifFull
\proofErdosDegSep
\else
See the ePrint version (to be posted).
\fi
\end{proof}

\begin{lemma}[Vertex separation in the Erd\H{o}s-R\'{e}nyi model]
\label{lem:erdos-sep}
Let $0 < \epsilon < 1/9$, $d \geq 3$, $C \geq 3$, $h = n^{(1-\epsilon)/8}$. Suppose $0 < p = p(n) \leq \frac{1}{2}$ is such that $p = \omega(n^{-\epsilon}\log n)$. Then $G(n,p)$ is $(d, C\log n)$-separated with probability $1 - O(n^{-(1-\epsilon)/8})$.
\end{lemma}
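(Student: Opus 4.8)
The plan is to verify the two clauses in the definition of $(d, C\log n)$-separation separately and then combine their failure probabilities by a union bound. Setting $m = h = n^{(1-\epsilon)/8}$, I would first handle the degree separation of the high-degree vertices by invoking \cref{lem:erdos-deg-sep} with the choice $\alpha(n) = d\, n^{-(1-\epsilon)/4}$. One checks the three hypotheses: $m \to \infty$ and $\alpha(n) \to 0$ are immediate, while $m = o\big((pqn/\log n)^{1/4}\big)$ follows because $p = \omega(n^{-\epsilon}\log n)$ together with $q \geq 1/2$ gives $pqn/\log n = \omega(n^{1-\epsilon})$, so that $(pqn/\log n)^{1/4} = \omega(n^{(1-\epsilon)/4})$ dominates $m = n^{(1-\epsilon)/8}$. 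With these choices the guaranteed gap $\tfrac{\alpha(n)}{m^2}(pqn/\log n)^{1/2}$ simplifies to $d\,\big(pqn/(n^{1-\epsilon}\log n)\big)^{1/2}$, which is at least $d$ exactly when $pq \geq n^{-\epsilon}\log n$ — again guaranteed by the hypothesis on $p$. The resulting failure probability $m\alpha(n) + 1/[m(\log(n/m))^2]$ is at most $(d+1)n^{-(1-\epsilon)/8}$, since $m\alpha(n) = d\,n^{-(1-\epsilon)/8}$ and the second term is at most $1/m = n^{-(1-\epsilon)/8}$.

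For the medium-degree clause I would fix a pair $i,j$ of non-high-degree vertices and let $X_{ij}$ be their neighborhood distance, i.e. the number of high-degree vertices adjacent to exactly one of $i,j$. Each high-degree vertex contributes with probability $2p(1-p)$, so $\expec{X_{ij}} = 2hp(1-p) = \Omega\big(n^{(1-9\epsilon)/8}\log n\big)$ using $p = \omega(n^{-\epsilon}\log n)$ and $h = n^{(1-\epsilon)/8}$. Here the restriction $\epsilon < 1/9$ is precisely what forces the exponent $(1-9\epsilon)/8$ to be positive, so that $\expec{X_{ij}}$ grows faster than any constant multiple of $\log n$; in particular $\expec{X_{ij}} \geq 4C\log n$ eventually. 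Applying the lower-tail bound of \cref{lem:chernoff} with $\lambda = \expec{X_{ij}} - C\log n$, and noting $(\expec{X_{ij}} - C\log n)^2/\expec{X_{ij}} \geq \expec{X_{ij}} - 2C\log n \geq 2C\log n$ (so the Chernoff exponent $\lambda^2/(2\expec{X_{ij}})$ is at least $C\log n$), yields $\prob{X_{ij} < C\log n} \leq n^{-C}$. A union bound over the fewer than $n^2$ pairs then gives failure probability at most $n^{-C+2} \leq n^{-(1-\epsilon)/8}$, since $C \geq 3$.

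The step I expect to be the main obstacle is justifying that \cref{lem:chernoff} legitimately applies to $X_{ij}$. The catch is that being high-degree is itself a random event determined by the degree sequence, and whether a vertex $v$ is high-degree is a priori correlated with whether $v$ is adjacent to $i$ or to $j$ — the very edges whose status $X_{ij}$ records — so conditioning on the high-degree set could bias the indicator summands and destroy their independence. The way around this is to run the argument conditioned on the degree-separation event established in the first part: because the relevant degree gaps near the high/medium boundary exceed $2$ (we have $d \geq 3$), flipping the at most two edges joining any candidate $v$ to the pair $\{i,j\}$ changes degrees by at most $2$ and hence cannot move any vertex across the high-degree boundary. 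Thus the identity of the high-degree set is insensitive to those edges, the events ``$v$ is high-degree'' and ``$v$ adjacent to exactly one of $i,j$'' decouple, and $X_{ij}$ becomes a sum of independent indicators with the stated mean. Combining the two failure bounds by a final union bound gives total failure probability at most $(d+2)n^{-(1-\epsilon)/8} = O(n^{-(1-\epsilon)/8})$, as claimed.
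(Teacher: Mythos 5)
Your proposal is correct and follows essentially the same route as the paper's proof: invoke \cref{lem:erdos-deg-sep} with $m = h$ and $\alpha(n) = d\,n^{-(1-\epsilon)/4}$ to get degree gaps of at least $d$ with failure probability $(d+1)n^{-(1-\epsilon)/8}$, then bound the neighborhood distance of each medium-degree pair via \cref{lem:chernoff} and a union bound over at most $n^2$ pairs, using exactly the paper's observation that degree gaps exceeding $2$ decouple high-degree membership from adjacency to $i$ and $j$. If anything, your write-up is more careful than the paper's: you verify the hypotheses of \cref{lem:erdos-deg-sep} explicitly, use the correct expectation $2hp(1-p)$ (the paper writes $hp(1-p)$), and arrange $\expec{X_{ij}} \geq 4C\log n$ so that the Chernoff exponent $\lambda^2/(2\expec{X_{ij}})$ is genuinely at least $C\log n$, tidying a factor-of-two slip in the paper's bookkeeping.
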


\begin{proof}
\ifFull
\proofLemErdosSep
\else
See the ePrint version (to be posted).
\fi
\end{proof}

Thus, high-degree vertices are well-separated with high probability in the
Erd\H{o}s-R\'{e}nyi model, and the medium-degree vertices are distinguished
with high probability by their high-degree neighborhoods.

\subsection{Vertex separation in the random power-law graph model}
We next study vertex separation for a random power-law graph model,
which can match the degree distributions of many graphs that 
naturally occur in social networking and science.
For more information about power-law graphs and their applications,
see e.g.~\cite{Caldarelli2013,doi:10.1080/15427951.2004.10129088,Newman2006}.

In the random power-law graph model, vertex indices are used to define edge weights and therefore do not necessarily start at 1. 
The lowest index that corresponds to an actual vertex is denoted $i_0$. 
So vertex indices range from $i_0$ to $i_0 + n$. 
Additionally, there are two other special indices $\hm$ and $\ml$,
which we define in this section,
that separate the three classes of vertices.

\begin{definition}
The vertices ranging from $i_0$ to $\hm$ are the \emph{high-degree} vertices, those that range from $\hm + 1$ to $\ml$ are the \emph{medium-degree} vertices, and those beyond $\ml$ are the \emph{low-degree} vertices.
\end{definition}

In this model,
the value of $i_0$ is constrained by the requirement that $P[i_0,i_0] < 1$. When $\gamma \geq 3$, this constraint is not actually restrictive. However, when $\gamma < 3$, $i_0$ must be asymptotically greater than $n^{-(\gamma - 3)/2}$. The constraints on $i_0$ also constrain the value of the maximal and average degree of the graph.

We define $\hm$ and $\ml$ to be independent of $i_0$, but dependent on parameters that control the amount and probability of separation at each level. 
The constraints that $i_0 < \hm$ and $\hm < \ml$ translate into corresponding restrictions on the valid values of $\gamma$, namely that $\gamma > 5/2$ and $\gamma < 3$.
We define $\hm$ in the following lemma.

\begin{lemma}[Separation of high-degree vertices]
\label{lem:sep-high}
In the $G(\vector{w}^\gamma)$ model, let $\delta_i = \abs{w_{i+1} - w_{i}}/2$. Then,
\begin{equation}
\label{eq:delta}
\frac{c}{2(\gamma - 1)}(i+1)^{-\frac{\gamma}{\gamma - 1}} \leq \delta_i \leq \frac{c}{2(\gamma - 1)}i^{-\frac{\gamma}{\gamma - 1}}.
\end{equation}
Moreover, for all $\epsilon_1$ satisfying $0 < \epsilon_1 \leq 1$ and $C_1 > 0$, the probability that
\[
\abs{\deg(i) - w_i} < \epsilon_1\delta_i \qquad \text{for all } i \leq \hm \defeq \Round{\frac{c\epsilon_1^2}{16(\gamma - 1)^2 C_1 \log n}}^{\frac{\gamma- 1}{2\gamma- 1}} 
\]
is at least $1 - n^{-C_1}$.
\end{lemma}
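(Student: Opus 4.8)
The plan is to prove the two claims separately: first the deterministic bracketing of $\delta_i$ in \cref{eq:delta}, then the high-probability concentration bound on $|\deg(i) - w_i|$, and finally to verify that the chosen threshold $\hm$ is exactly what is needed to push the concentration through a union bound over all $i \le \hm$.

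For the bound on $\delta_i$, I would simply apply the mean value theorem to the function $f(x) = c\,x^{-1/(\gamma-1)}$, since $w_i = f(i)$ and $\delta_i = |f(i+1) - f(i)|/2 = |f'(\xi)|/2$ for some $\xi \in (i, i+1)$. Computing $f'(x) = -\frac{c}{\gamma-1} x^{-\gamma/(\gamma-1)}$ and noting that $|f'|$ is decreasing in $x$, monotonicity gives $|f'(i+1)| \le |f'(\xi)| \le |f'(i)|$, which is precisely \cref{eq:delta} after dividing by $2$. This part is routine.

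The substance is the concentration claim. Here $\deg(i) = \sum_{j \ne i} X_{ij}$, where $X_{ij}$ is the indicator that edge $(i,j)$ is present, so these are independent Bernoulli variables with $\expec{\deg(i)} = \sum_{j \ne i} P[i,j]$. The natural route is to invoke the Chernoff bound of \cref{lem:chernoff} with $\lambda = \epsilon_1 \delta_i$, controlling both tails of $|\deg(i) - \expec{\deg(i)}|$, and bounding $\expec{\deg(i)}$ above by $w_i$ (this should follow from the definition of the $w_i$ and $\rho = 1/\sum_j w_j$; one must check that $\expec{\deg(i)} = w_i \cdot \frac{\sum_j w_j - w_i}{\sum_j w_j} \le w_i$, so the approximation $\expec{\deg(i)} \approx w_i$ is valid up to a lower-order correction that I would need to confirm is absorbed). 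For a single vertex $i$, the Chernoff exponent is of order $\lambda^2/\expec{\deg(i)} \approx \epsilon_1^2 \delta_i^2 / w_i$, and substituting the lower bound $\delta_i \ge \frac{c}{2(\gamma-1)}(i+1)^{-\gamma/(\gamma-1)}$ together with $w_i = c\,i^{-1/(\gamma-1)}$ yields an exponent growing like $\epsilon_1^2 c\, i^{-(2\gamma-1)/(\gamma-1)} / (16(\gamma-1)^2)$ up to the index-shift between $i$ and $i+1$.

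The main obstacle, and the reason the precise form of $\hm$ appears, is making this exponent at least $2 C_1 \log n$ uniformly for every $i \le \hm$, so that a union bound over at most $\hm$ vertices (times the two tails) leaves a residual probability of at most $n^{-C_1}$. Since the Chernoff exponent is \emph{decreasing} in $i$ (larger index means smaller degree and smaller gap), the binding constraint is the largest index $i = \hm$; setting the exponent at $i = \hm$ equal to $2 C_1 \log n$ and solving for $\hm$ is exactly what produces $\hm = \bigl(\frac{c\,\epsilon_1^2}{16(\gamma-1)^2 C_1 \log n}\bigr)^{(\gamma-1)/(2\gamma-1)}$. I would therefore: (i) show the per-vertex failure probability is at most $n^{-2C_1}$ for $i \le \hm$ by checking the exponent inequality at the boundary and using monotonicity for smaller $i$; (ii) apply the union bound to get total failure probability at most $2\hm \cdot n^{-2C_1}$; and (iii) confirm $2\hm \cdot n^{-2C_1} \le n^{-C_1}$, which holds because $\hm$ is polynomially bounded in $n$ (indeed sub-linear) so $\hm \le n^{C_1}/2$ for large $n$. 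The delicate points to watch are the off-by-one between the $\delta_i$ lower bound's index $(i+1)$ and the $w_i$ in the denominator, and ensuring the additive $\lambda/3$ term in the upper-tail Chernoff bound does not weaken the exponent below what is claimed; both should only affect constants that are comfortably absorbed into the stated threshold.
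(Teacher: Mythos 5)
Your proposal follows the paper's argument essentially step for step: the paper likewise derives \cref{eq:delta} from the monotonicity of the derivative of $i \mapsto c\,i^{-1/(\gamma-1)}$ (your mean value theorem phrasing is the same computation in different words), likewise applies \cref{lem:chernoff} with $\lambda_i = \epsilon_1\delta_i$, uses $w_i \geq \delta_i \geq \lambda_i$ to reduce both tails to an exponent of order $\lambda_i^2/(2w_i) \geq \frac{c\epsilon_1^2}{8(\gamma-1)^2}\, i^{-\frac{2\gamma-1}{\gamma-1}}$, and calibrates the threshold so that the per-vertex failure probability is $n^{-C}$ with $C \leq 2C_1$ (which is where your factor $16 = 8 \cdot 2$ also comes from) before taking a union bound over $i \leq \hm$. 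The centering issue you flag is real but harmless, as you suspected: $w_i - \expec{\deg(i)} = \rho w_i^2 < 1$ by the model constraint $\max_i w_i^2 < \sum_k w_k$, while $\epsilon_1\delta_i \to \infty$ for $i \leq \hm$, so the correction is absorbed into constants; the paper silently identifies $\expec{\deg(i)}$ with $w_i$.

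One step of yours is overstated, however. In your step (iii) you claim $2\hm\, n^{-2C_1} \leq n^{-C_1}$ holds ``because $\hm$ is sub-linear, so $\hm \leq n^{C_1}/2$ for large $n$.'' Sub-linearity does not give this: by \cref{eq:hm}, $\hm = \Theta\bigl(n^{\frac{1}{2\gamma-1}}(\log n)^{-\frac{\gamma-1}{2\gamma-1}}\bigr)$, so $\hm \leq n^{C_1}/2$ fails whenever $C_1 < 1/(2\gamma-1)$, and since the per-vertex bound at the boundary $i = \hm$ is only $n^{-2C_1}$, the union bound genuinely does not close for such small $C_1$ (a refined summation over $i$ does not help, since a constant fraction of the indices $i \leq \hm$ contribute essentially $n^{-2C_1}$ each). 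To be fair, the paper's own proof hides the same constraint: it chooses $C$ with $C_1 + \log\hm/\log n < C \leq 2C_1$, and that interval is nonempty only if $\hm < n^{C_1}$, i.e., asymptotically only if $C_1 > 1/(2\gamma-1)$. So you have faithfully reproduced the paper's argument including its one soft spot; but your justification of that spot is a non sequitur as written and should be replaced by the explicit hypothesis $\log\hm/\log n < C_1$, which is what both your proof and the paper's actually require.
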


\begin{proof}
The first statement follows from the fact that $w_i$ is a convex function of $i$ and from taking its derivative at $i$ and $i+1$.

For the second statement, let $C > 0$ and let $\hm' \defeq \Round{\frac{c\epsilon_1^2}{8(\gamma - 1)^2 C \log n}}^{\frac{\gamma - 1}{2\gamma - 1}}$. We will show that if $i \leq \hm'$, then
\begin{equation}\label{eq:sep-high}
\prob{\abs{\deg(i) - w_i}\geq \epsilon_1 \delta_i} < n^{-C}.
\end{equation}
Now we choose $C$ such that $C_1 + \log\hm/\log n < C \leq 2C_1$. The inequality $C \leq 2C_1$ implies that $\hm \leq \hm'$ and (\ref{eq:sep-high}) holds for all $i \leq \hm$. By the union bound applied to \cref{eq:sep-high}
\[
\prob{\exists i \leq \hm, \abs{\deg(i) - w_i}\geq \epsilon_1 \delta_i} \leq \hm n^{-C}.
\]
Since $C_1 + \log\hm/\log n < C$, the right hand side is bounded above by $n^{-C_1}$. This proves the result.

Now, we prove \cref{eq:sep-high}. Clearly, since $\delta_i = (w_{i} - w_{i+1})/2$, we have that $w_i \geq \delta_i$. So if $\epsilon_1 \leq 1$ and $\lambda_i = \epsilon_1 \delta_i$, then $w_i \geq \lambda_i /3$. This implies that
\[
\frac{\lambda_i^2}{w_i + \lambda_i/3} \geq \frac{\lambda_i^2}{2w_i} \geq \frac{c\epsilon_1^2}{8(\gamma - 1)^2} i^{-\frac{2\gamma-1}{\gamma-1}},
\]
where the second inequality follows from \cref{eq:delta} and the definition of $w_i$ given in \cref{def:rplg}. If $i \leq \hm'$, the right hand side is lower-bounded by $C\log n$. The result follows by applying a Chernoff bound (\cref{lem:chernoff}). 
\end{proof}

For simplicity, we often use the following observation.

\begin{observation}
Rewriting $\hm$ to show its dependence on $n$, we have
\begin{equation}
\label{eq:hm}
\hm(\epsilon_1, C_1) = K_1(\epsilon_1, C_1)\ n^{\frac{1}{2\gamma - 1}}\Round{\log n}^{-\frac{\gamma - 1}{2\gamma - 1}}, 
\quad K_1(\epsilon_1, C_1) \defeq \Round{\frac{\gamma - 2}{(\gamma - 1)^3} \frac{w\epsilon_1^2}{16 C_1}}^{\frac{\gamma - 1}{2\gamma - 1}}.
\end{equation}
For the graph model to make sense, the high-degree threshold must be asymptotically greater than the lowest index. In other words, we must have that $i_0 = o(\hm)$. Since $i_0 = \Omega(n^{-(\gamma-3)/2})$, this implies that $\gamma > 5/2$.
\end{observation}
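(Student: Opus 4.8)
The plan is to prove the two assertions of the observation in turn: first the closed form \cref{eq:hm} for $\hm$, and then the necessity of $\gamma > 5/2$.

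For the first assertion I would simply substitute the value $c = \frac{\gamma-2}{\gamma-1}\,w\,n^{1/(\gamma-1)}$ from \cref{eq:rplg} into the definition $\hm = \Round{\frac{c\epsilon_1^2}{16(\gamma-1)^2 C_1 \log n}}^{(\gamma-1)/(2\gamma-1)}$ supplied by \cref{lem:sep-high}, and then separate the constant, the power of $n$, and the power of $\log n$. The only quantity worth tracking is the exponent of $n$: the factor $n^{1/(\gamma-1)}$ raised to the power $(\gamma-1)/(2\gamma-1)$ contributes $n^{1/(2\gamma-1)}$, while the $\log n$ in the denominator contributes $(\log n)^{-(\gamma-1)/(2\gamma-1)}$. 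Collecting the remaining constants into $K_1(\epsilon_1,C_1)$ then yields exactly \cref{eq:hm}.

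For the second assertion I would first pin down the asymptotics of $i_0$ from the feasibility requirement $P[i_0,i_0] < 1$. Using the edge-probability formula \cref{eq:powerlaw-probs}, this reads $K_0\Round{n^{\gamma-3} i_0^2}^{-1/(\gamma-1)} < 1$; raising both sides to the power $-(\gamma-1) < 0$ (which reverses the inequality, since $\gamma>1$) gives $n^{\gamma-3} i_0^2 > K_0^{\gamma-1}$, hence $i_0 = \Omega\Round{n^{(3-\gamma)/2}} = \Omega\Round{n^{-(\gamma-3)/2}}$, as the observation asserts. It then remains to impose the modeling constraint $i_0 = o(\hm)$. Comparing the leading powers of $n$ from \cref{eq:hm} against the bound just obtained, a necessary condition is $(3-\gamma)/2 < 1/(2\gamma-1)$; the polylogarithmic factor in $\hm$ cannot affect this strict comparison of $n$-exponents once the exponents themselves differ. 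Since $2\gamma-1 > 0$ throughout the regime of interest, I would clear denominators to obtain $(3-\gamma)(2\gamma-1) < 2$, which rearranges to $(2\gamma-5)(\gamma-1) > 0$. Because $\gamma > 1$, this is equivalent to $\gamma > 5/2$.

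The main obstacle here is almost entirely bookkeeping rather than mathematics: there is no deep step, only the need to carry the fractional exponent $(\gamma-1)/(2\gamma-1)$ correctly and to keep track of the two inequality reversals (once when raising $P[i_0,i_0] < 1$ to a negative power, and once implicitly when clearing denominators by the positive quantity $2\gamma-1$). The one genuinely conceptual point is recognizing that the requirement $i_0 = o(\hm)$ reduces to a comparison of the exponents of $n$ alone, so that the entire feasibility question collapses to the quadratic inequality $(2\gamma-5)(\gamma-1) > 0$.
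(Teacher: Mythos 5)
Your proposal is correct and follows essentially the same route as the paper, which states this observation without a separate proof: the closed form \cref{eq:hm} is exactly the substitution of $c$ from \cref{eq:rplg} into the definition of $\hm$ in \cref{lem:sep-high}, and the bound $i_0 = \Omega\Round{n^{-(\gamma-3)/2}}$ together with the exponent comparison $(3-\gamma)/2 < 1/(2\gamma-1)$ is precisely the reasoning the paper sketches in the surrounding text via the constraint $P[i_0,i_0] < 1$. Your algebra checks out, including the inequality reversal and the factorization $(2\gamma-5)(\gamma-1) > 0$, so nothing is missing.
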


We next define $\ml$, the degree threshold for medium-degree
vertices, in the following lemma.

\begin{lemma}[Separation of medium-degree vertices]
\label{lem:sep-mid}
Let $K_0$ be defined as in \cref{def:rplg}, $K_1(\epsilon_1, C_1)$ be defined as in \cref{eq:hm}, and
\begin{equation}
 K_2(\epsilon_1, C_1, \epsilon_2, C_2) \defeq \frac{K_0^{\gamma -1}K_1^{\gamma - 2}(\epsilon_1, C_1)}{(C_2 + 2\Gamma + 2\log (K_0^{\gamma -1}K_1^{\gamma - 2}(\epsilon_1, C_1)) + 2\epsilon_2)^{\gamma - 1}}.
\end{equation}
Let $X_{ij}$ denote the neighborhood distance between two vertices $i$ and $j$ in $G(\vector{w}^\gamma)$. If $5/2 < \gamma < 3$, for every $\epsilon_2 > 0$ and $C_2 > 0$, the probability that
\[
X_{ij} > \epsilon_2 \log n, \quad \text{for all } \hm \leq i,j \leq \ml
\]
where
\begin{equation}
\label{eq:ml}
\ml(\epsilon_1, C_1, \epsilon_2, C_2) \defeq K_2(\epsilon_1, C_1, \epsilon_2, C_2) \ n^{\Gamma}\Round{\log n}^{-\frac{3(\gamma - 1)^2}{2\gamma - 1}},
\quad \Gamma \defeq -\frac{2\gamma^2 - 8  \gamma + 5}{2\gamma - 1},
\end{equation}
is at least $1 - n^{-C_2}$ for sufficiently large $n$.
\end{lemma}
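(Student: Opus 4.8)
The plan is to mirror the Erd\H{o}s--R\'{e}nyi argument (the proof of \cref{lem:erdos-sep}), adapting it to the index-based degree classes of the power-law model. A convenient feature here is that ``high-degree'' is decided purely by the index (indices $i_0$ through $\hm$), not by the realized degree, so no conditioning is needed to know which vertices are high-degree. For fixed medium-degree $i,j$, I would write the neighborhood distance as $X_{ij} = \sum_{k=i_0}^{\hm} Y_k$, where $Y_k$ indicates that $k$ is adjacent to exactly one of $i,j$. Distinct $k$ involve disjoint edge sets, so the $Y_k$ are independent $\{0,1\}$ variables and \cref{lem:chernoff} applies directly to $X_{ij}$, with $\expec{Y_k} = P[k,i](1-P[k,j]) + P[k,j](1-P[k,i])$.

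The first real step is to lower-bound $\expec{X_{ij}}$ uniformly over the range $\hm \le i,j \le \ml$. Since $P[k,i] = K_0(n^{\gamma-3})^{-1/(\gamma-1)}(ki)^{-1/(\gamma-1)}$ is decreasing in $i$, the minimum is at $i=j=\ml$, and dropping the (lower-order) quadratic correction gives
\[
\expec{X_{ij}} \ge (1-o(1))\,2K_0\Round{n^{\gamma-3}}^{-\frac{1}{\gamma-1}}\ml^{-\frac{1}{\gamma-1}}\sum_{k=i_0}^{\hm}k^{-\frac{1}{\gamma-1}}.
\]
Here I would use that the exponent $1/(\gamma-1)$ lies in $(1/2,2/3)$ for $5/2<\gamma<3$: being below $1$, the sum $\sum_{k=i_0}^{\hm}k^{-1/(\gamma-1)}$ is dominated by its upper end and equals $(1-o(1))\frac{\gamma-1}{\gamma-2}\hm^{(\gamma-2)/(\gamma-1)}$, where $i_0 = o(\hm)$ (this is where $\gamma>5/2$ enters). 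Substituting the closed forms \cref{eq:hm} for $\hm$ and \cref{eq:ml} for $\ml$, both the $n$-power and the $\log n$-power cancel by the very design of $\Gamma$ and of the $\log n$-exponent of $\ml$, leaving $\expec{X_{ij}} \ge (1-o(1))\,\kappa\, M' \log n$, where $\kappa = 2(\gamma-1)/(\gamma-2)$ and $M' = C_2 + 2\Gamma + 2\log(K_0^{\gamma-1}K_1^{\gamma-2}) + 2\epsilon_2$ is the factor raised to the power $\gamma-1$ in the denominator of $K_2$.

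I would then apply the lower-tail bound of \cref{lem:chernoff} with $\lambda = \expec{X_{ij}} - \epsilon_2\log n$, using $\frac{(\expec{X_{ij}}-\epsilon_2\log n)^2}{2\expec{X_{ij}}} \ge \tfrac12\expec{X_{ij}} - \epsilon_2\log n$ exactly as in the Erd\H{o}s case, to bound the probability that a single pair fails to be $\epsilon_2\log n$-separated. A union bound over all $\binom{\ml}{2} = O\big(n^{2\Gamma}(\log n)^{O(1)}\big)$ pairs is then designed to finish the proof: the factor $n^{2\Gamma}$ is exactly what the $2\Gamma$ summand inside $M'$ is there to absorb, and the constant and polylogarithmic factors are swallowed by the $2\log(K_0^{\gamma-1}K_1^{\gamma-2})$ and $2\epsilon_2$ slack, giving total failure probability at most $n^{-C_2}$ for large $n$.

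I expect the main obstacle to be the constant bookkeeping in the middle step. One must check that inserting the explicit $\hm$ and $\ml$ makes both the $n$-power and the $\log n$-power of the expectation vanish identically, and then verify that the leading constant---after the Chernoff slack $\tfrac12\expec{X_{ij}} - \epsilon_2\log n$ and the union bound over $\Theta(n^{2\Gamma})$ pairs---clears $C_2\log n$ in the exponent. This is precisely what dictates the ungainly definition of $K_2$: the numerator $K_0^{\gamma-1}K_1^{\gamma-2}$ cancels the model constants, while the $(\cdots)^{\gamma-1}$ denominator injects the $2\Gamma$ and $2\log(\cdots)$ corrections demanded by the union bound. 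A secondary point requiring care is confirming that the dropped terms are genuinely lower order: the quadratic correction $\sum_k P[k,i]P[k,j]$ converges because $2/(\gamma-1)>1$, and the bottom of the harmonic sum near $i_0$ is negligible because $i_0=o(\hm)$, so neither erodes the leading constant.
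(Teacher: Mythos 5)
Your proposal is correct and takes essentially the same route as the paper's own proof: the same decomposition of $X_{ij}$ into independent per-high-degree-vertex indicators, the same lower bound on $\expec{X_{ij}}$ at $i=j=\ml$ engineered (via $K_2$) so the $n$- and $\log n$-powers cancel leaving a constant times $\log n$, and the same Chernoff-plus-union-bound step in which the $2\Gamma + 2\log(K_0^{\gamma-1}K_1^{\gamma-2})$ terms absorb the $\ml^2$ pair count. If anything, your bookkeeping is slightly tighter than the paper's: by evaluating $\sum_k k^{-1/(\gamma-1)}$ asymptotically you retain the factor $2(\gamma-1)/(\gamma-2) > 2$, which correctly covers the $2\expec{X_{ij}}$ denominator in the Chernoff exponent, whereas the paper's displayed inequality $(\expec{X_{ij}}-d)^2/\expec{X_{ij}} \geq C\log n$ as written only yields a per-pair bound of $n^{-C/2}$ and implicitly relies on a constant adjustment.
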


\begin{proof}
Let $C > 0$ and let
\[
\ml' \defeq \Round{\frac{C_2 + 2\Gamma + 2\log (K_0^{\gamma -1}K_1^{\gamma - 2}) + 2\epsilon_2}{C + 2\epsilon_2}}^{\gamma - 1} \ml.
\]
We claim that if $\hm \leq i,j \leq \ml'$, then
\begin{equation}\label{eq:sep-mid}
\prob{X_{ij} \leq \epsilon_2\log n} \leq n^{-C}.
\end{equation}
If we choose $C = C_2 + 2\Gamma + 2\log K_0^{\gamma -1}K_1^{\gamma - 2}$, we have that $\ml = \ml'$, so that \cref{eq:sep-mid} applies to all $i,j$ such that $i, j \leq \ml$. Moreover, since
\[
\ml \leq K_0^{\gamma - 1} K_1^{\gamma - 2}n^{\Gamma} \leq n^{\log(K_0^{\gamma - 1} K_1^{\gamma - 2})}n^{\Gamma},
\]
our choice of $C$ implies that $\ml^2 \ n^{-C} \leq n^{-C_2}$. By applying the union bound to \cref{eq:sep-mid}, we have
\[
\prob{\exists i, j \text{ s.t. } \hm \leq i,j \leq \ml, \ X_{ij} \leq\ \epsilon_2\log n}\ \leq\ \ml^2 n^{-C}\ \leq\ n^{-C_2},
\]
which establishes the lemma.

Let us now prove the claim. Observe that $X_{ij}$ is the sum over the high-degree vertices $k$, of indicator variables $X^k_{ij}$ for the event that vertex $k$ is connected to exactly one of the vertices $i$ and $j$. It i For fixed $i$ and $j$, these are independent random variables. Therefore, we can apply a Chernoff bound. The probability that $X^k_{ij} = 1$ is
\[
P[i,k] (1 - P[j,k]) + P[j,k] (1 - P[i,k]) \geq 2P[\ml,\hm](1 - P[i_0, \hm]).
\]
Since $P[i_0, \hm] \to 0$, for sufficiently large $n$, this expression is bounded below by $P[\ml, \hm]$, and
\[
\expec{X_{ij}} \geq \hm P[\ml,\hm] \geq (C + 2\epsilon_2)\log n,
\]
by \cref{eq:powerlaw-probs}, \cref{eq:hm} and \cref{eq:ml}, as can be shown by a straightforward but lengthy computation. Let $d = \epsilon_2\log n$. This implies that
\[
\frac{(\expec{X_{ij}} - d)^2}{\expec{X_{ij}}} \geq \expec{X_{ij}} - 2d \geq C \log n.
\]
Therefore, applying the Chernoff bound (\cref{lem:chernoff}) to the $X_{ij}^k$ for fixed $i$ and $j$ and all high-degree vertices $k$ proves the claim.
\end{proof}

\begin{observation}
\label{obs:ml}
We would have the undesirable situation that $\ml = o(1)$ whenever $\frac{2\gamma^2 - 8  \gamma + 5}{2\gamma - 1} > 0$, or equivalently when $\gamma > 2 + \sqrt{3/2} > 3$. In fact, in order for $\hm = o(\ml)$, we must have $\gamma < 3$.
\end{observation}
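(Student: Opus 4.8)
The plan is to reduce both assertions to a comparison of the exponent of $n$ that governs the asymptotic growth of $\ml$ and $\hm$. From \cref{eq:ml} the growth of $\ml$ is controlled by $n^{\Gamma}$ with $\Gamma = -\frac{2\gamma^2 - 8\gamma + 5}{2\gamma - 1}$, and from \cref{eq:hm} that of $\hm$ by $n^{1/(2\gamma - 1)}$. Since the constants $K_1, K_2$ are positive and the polylogarithmic factors in both formulas are asymptotically dominated by any strictly positive or strictly negative power of $n$, the polynomial exponents alone decide the outcome except at a single knife-edge value, which I would treat separately. Throughout I would work in the regime $\gamma > 5/2$ forced by the earlier observation, so that $2\gamma - 1 > 0$ and clearing this denominator preserves inequality directions.

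For the first assertion, I would observe that $\ml = o(1)$ exactly when $n^{\Gamma} \to 0$, i.e.\ when $\Gamma < 0$; because the denominator $2\gamma - 1$ is positive, this is equivalent to positivity of the numerator $2\gamma^2 - 8\gamma + 5$. Solving $2\gamma^2 - 8\gamma + 5 = 0$ by the quadratic formula gives the roots $\gamma = 2 \pm \sqrt{3/2}$, so the numerator is positive for $\gamma < 2 - \sqrt{3/2}$ or $\gamma > 2 + \sqrt{3/2}$. In the range of interest only the upper branch is relevant, and since $\sqrt{3/2} > 1$ we get $2 + \sqrt{3/2} > 3$, yielding both the stated equivalence and the inclusion.

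For the second assertion, $\hm = o(\ml)$ requires the exponent of $n$ in $\hm$ to be strictly below that in $\ml$, i.e.\ $\frac{1}{2\gamma - 1} < \Gamma$. Clearing the positive factor $2\gamma - 1$ turns this into $1 < -\Round{2\gamma^2 - 8\gamma + 5}$, equivalently $2\gamma^2 - 8\gamma + 6 < 0$, which factors as $2(\gamma - 1)(\gamma - 3) < 0$ and therefore holds exactly when $1 < \gamma < 3$. This gives $\gamma < 3$ as the necessary condition.

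I expect the only genuinely delicate point to be the boundary $\gamma = 3$, where the $n$-exponents of $\hm$ and $\ml$ coincide (both equal $1/5$) so the comparison is no longer settled by the power of $n$ and instead devolves onto the polylogarithmic factors. There I would compute the log exponents directly: $\ml$ carries $(\log n)^{-12/5}$ against $(\log n)^{-2/5}$ for $\hm$, so that $\ml = o(\hm)$ and $\hm = o(\ml)$ indeed fails at $\gamma = 3$. This confirms that the requirement is the strict inequality $\gamma < 3$ rather than $\gamma \le 3$, completing the argument.
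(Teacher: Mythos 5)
Your computation is correct and coincides with the paper's (implicit) reasoning---the observation is stated without proof, and the intended argument is exactly this comparison of the $n$-exponents $\Gamma$ and $1/(2\gamma-1)$ from \cref{eq:ml} and \cref{eq:hm}, with your boundary check at $\gamma=3$ (where both exponents equal $1/5$ and the log factors $(\log n)^{-12/5}$ vs.\ $(\log n)^{-2/5}$ give $\ml = o(\hm)$) correctly settling the strictness of $\gamma < 3$. One pedantic note: since the factor $(\log n)^{-3(\gamma-1)^2/(2\gamma-1)}$ itself tends to $0$, one also has $\ml = o(1)$ at the knife edge $\Gamma = 0$, so your ``exactly when $\Gamma < 0$'' should read ``whenever $\Gamma \le 0$''---but the observation claims only sufficiency, so this does not affect the result.
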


We illustrate the breakpoints for high-, medium-, and low-degree vertices
in \cref{fig:breaks}.

\begin{figure}[hbt]
\centering
\includegraphics[scale=0.8]{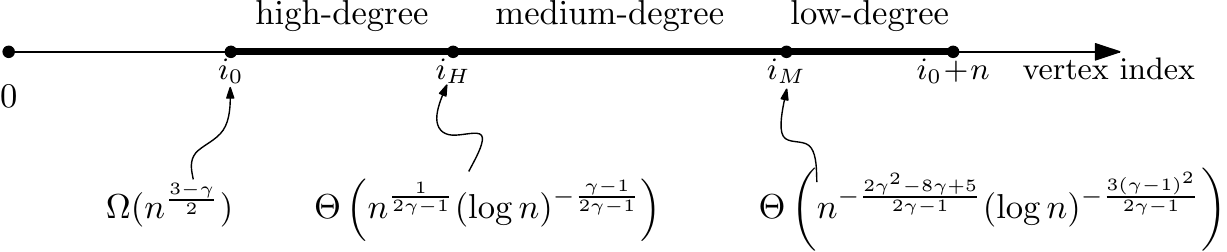}
\caption{Degree breakpoints for the random power-law graph model.}
\label{fig:breaks}
\end{figure}

The next lemma summarizes the above discussion and provides the
forms of $\hm$ and $\ml$ that we use in our analysis.

\begin{lemma}[Vertex separation in the power-law model]
\label{lem:power-sep}
Let $5/2 < \gamma < 3$. Fix $\epsilon > 0, C_1 > 0, C_2 > 0$. Let $\hm = \hm(\epsilon_1, C_1)$ and $\ml = \ml(\epsilon_1, C_1, \epsilon_2, C_2)$ where $\epsilon_1 = 1$ and $\epsilon_2 = \epsilon$. Let 
\[
d = n^{\frac{1}{2\gamma -1}} \quad \text{ and } \quad d' = \log n.
\]
For sufficiently large $n$, the probability that a graph $G(\vector{w}^\gamma)$ is not $(\epsilon d,\epsilon d')$-separated is at most $n^{-C_1} + n^{-C_2}$.
\end{lemma}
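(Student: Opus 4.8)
The plan is to read this lemma as the common corollary of \cref{lem:sep-high} and \cref{lem:sep-mid}. By the (power-law) definition of $(\epsilon d, \epsilon d')$-separation, I must show that with the stated probability every pair of high-degree vertices ($i_0 \leq i < j \leq \hm$) has observed degrees differing by at least $\epsilon d = \epsilon n^{1/(2\gamma-1)}$, and every pair of medium-degree vertices ($\hm \leq i,j \leq \ml$) has neighborhood distance at least $\epsilon d' = \epsilon \log n$. I would establish each of the two halves on its own high-probability event and combine them by a union bound, so that the total failure probability is at most $n^{-C_1} + n^{-C_2}$.

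The medium-degree half is immediate. Instantiating \cref{lem:sep-mid} with $\epsilon_2 = \epsilon$ (and the given $\epsilon_1 = 1$, $C_2$) yields directly that $X_{ij} > \epsilon \log n = \epsilon d'$ for all $\hm \leq i,j \leq \ml$ with probability at least $1 - n^{-C_2}$, which is exactly the neighborhood-distance requirement. So this part contributes failure probability at most $n^{-C_2}$ and needs no further work beyond checking that $\ml$ is the threshold fixed in \cref{eq:ml}.

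The high-degree half is the crux. Applying \cref{lem:sep-high} puts, with probability at least $1 - n^{-C_1}$, every observed degree $\deg(i)$ into the window $(w_i - \delta_i, w_i + \delta_i)$ for all $i \leq \hm$. Since $w_i = c\,i^{-1/(\gamma-1)}$ is strictly decreasing and $\delta_i = (w_i - w_{i+1})/2$, these windows are disjoint, so the observed degrees remain distinct and in the same order as the $w_i$; for consecutive indices this gives $\deg(i) - \deg(i+1) > (w_i - w_{i+1}) - \delta_i - \delta_{i+1} = \delta_i - \delta_{i+1}$. The remaining work is to upgrade this to the quantitative bound $\geq \epsilon d$. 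I would lower-bound the minimum gap over the high-degree range, attained at $i = \hm$, by combining the left inequality of \cref{eq:delta} with the closed form of $\hm$ in \cref{eq:hm}; a direct computation (using $c = \frac{\gamma-2}{\gamma-1}w\,n^{1/(\gamma-1)}$, whose $n^{1/(\gamma-1)}$ cancels) collapses to $\delta_{\hm} = \Theta\bigl(n^{1/(2\gamma-1)}(\log n)^{\gamma/(2\gamma-1)}\bigr)$, which dominates $d = n^{1/(2\gamma-1)}$ by a polylogarithmic factor. Hence for large $n$ the high-degree separation exceeds $\epsilon d$, and this part fails with probability at most $n^{-C_1}$.

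I expect this last translation to be the main obstacle. With the full half-gap deviation window, the crude inequality only certifies an observed separation of $\delta_i - \delta_{i+1}$, which is $\Theta(\log n)$ at the boundary $i = \hm$ and is therefore too small to reach $\epsilon n^{1/(2\gamma-1)}$; to obtain the stated bound one must instead exploit that the entire gap $2\delta_{\hm}$ is already of order $d$ up to polylog and retain a constant-fraction margin inside it — equivalently, invoke \cref{lem:sep-high} with a deviation parameter bounded away from the full half-gap. Verifying that the exponent bookkeeping in the $\delta_{\hm}$ estimate reduces exactly to $n^{1/(2\gamma-1)}$, and that this margin persists uniformly for all $5/2 < \gamma < 3$, is the single delicate calculation; everything else is the union bound.
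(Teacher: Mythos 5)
Your proposal follows the paper's proof in structure and substance: both reduce the lemma to \cref{lem:sep-high} and \cref{lem:sep-mid} combined by a union bound, with the medium-degree half immediate from \cref{lem:sep-mid} at $\epsilon_2=\epsilon$, and the high-degree half resting on the same computation $\delta_{\hm} = \Theta\bigl(n^{1/(2\gamma-1)}(\log n)^{\gamma/(2\gamma-1)}\bigr)$, which dominates $\epsilon d$ by a polylogarithmic factor. The delicate point you flag is genuine, and in fact it is exactly where the paper's own write-up is loose: the paper claims $\abs{\deg(i)-w_i} \leq \delta_{\hm}$ for all $i \leq \hm$ whp ``by \cref{lem:sep-high},'' but that lemma with $\epsilon_1=1$ only bounds each deviation by $\delta_i \geq \delta_{\hm}$, which, as you observe, certifies only the useless separation $\delta_i - \delta_{i+1}$ between consecutive observed degrees; your repair --- keeping a constant-fraction margin inside the gap, e.g.\ rerunning the Chernoff estimate at threshold $\delta_i - \epsilon d \geq \delta_i/3$ --- is the right fix, with the one caveat that simply invoking \cref{lem:sep-high} with a smaller $\epsilon_1$ shrinks $\hm(\epsilon_1, C_1)$, so the margin argument must be carried out for the full range $i \leq \hm(1, C_1)$ (at the cost of absorbing constants into $C_1$) rather than by re-parameterizing the lemma.
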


\begin{proof}
Let $\delta_i$ be defined as in \cref{lem:sep-high}. A straightforward computation using \cref{eq:rplg}, \cref{eq:delta}, and \cref{eq:hm} shows that
\[
\delta_{\hm} \geq \text{constant} \cdot n^{\frac{1}{2\gamma - 1}} \Round{\log n}^{\frac{\gamma}{2\gamma - 1}}.
\]
So for sufficiently large $n$, we have $\delta_{\hm} \geq 3\epsilon d/2$. For all $i \leq \hm$, the average degrees $w_i$ of consecutive vertices are at least $3\epsilon d/2$ apart. So for two high-degree vertices to be within $\epsilon d$ of each other, at least one of the two must have degree at least $(3\epsilon /2 - \epsilon /2)d$ away from its expected degree. By \cref{lem:sep-high}, the probability that some high-degree vertex $i$ satisfies $\abs{\deg(i) - w_i} > \delta_{\hm}$ is at most $n^{-C_1}$.

By \cref{lem:sep-mid}, the probability that there are two medium-degree vertices with neighborhood distance less than $\epsilon d'$ is at most $n^{-C_2}$.
\end{proof}

Thus, our marking scheme for the random power-law graph model is effective.

\section{Adversary Tolerance}

In this section, we study the degree to which our exemplary graph watermarking
scheme can tolerate an arbitrary edge-flipping adversary. 
\ifFull
To measure success, we use the notion of security and adversary advantage which are formally defined in \ref{sec:framework}. We quantify the number of edge flips that can be tolerated under the Erd\H{o}s-R\'{e}nyi model and the random power-law graph model.
\fi

\begin{theorem}[Security against an arbitrary edge-flipping adversary in the Erd\H{o}s-R\'{e}nyi model]
\label{thm:erdos-det-adv}
Let $0 < \epsilon < 1/9$, $d \geq 3$, $h = n^{(1-\epsilon)/8}$ and $p \leq 1/2$ such that $p = \omega(n^{-\epsilon} \log n)$. Let $d$ be sufficiently large so that 
\begin{equation}
\label{eq:erdos-t}
\epsilon \ \frac{d +1}{d-1} < 1.
\end{equation}
Suppose the similarity measure is the vertex distance $\dist_v$, the similarity threshold is $\theta = d$, we have a number $k = n^C$ of watermarked copies, and their identifiers are generated using $\ell = 8(2C+C')n^\epsilon$ bits. Suppose also that the identifiers map to sets of edges of a graph constrained by the fact that no more than $t = d$ edges can be incident to any vertex. The watermarking scheme defined in \cref{alg:scheme} is $(G(n,p), \dist_v, \theta, k, \ell)$-secure against any deterministic adversary.
\end{theorem}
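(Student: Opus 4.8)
The plan is to bound the adversary's advantage, which by definition is $\prob{\dist_v(G,G_A)<\theta \ \wedge\ \identify(z,G,\id_1,\dots,\id_k,G_A)\neq\id}$, from above by the conditional probability $\prob{\identify(z,G,\id_1,\dots,\id_k,G_A)\neq\id \mid \dist_v(G,G_A)<\theta}$ (legitimate since $\prob{E\cap B}\le\prob{E\mid B}$), and to show this is polynomially negligible. To do so I would isolate three events whose conjunction forces $\identify$ to return the correct identifier: (1) the sampled graph $G(n,p)$ is $(d,C\log n)$-separated in the sense of \cref{lem:erdos-sep}; (2) the $k=n^C$ identifiers are pairwise at Hamming distance at least $D=\ell p/2$; and (3) the adversary alters fewer than $D/2$ of the $\ell$ candidate edges carrying the mark. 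It then suffices to bound the failure probability of each event and apply a union bound.

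For event (1) I would apply \cref{lem:erdos-sep} but exploit that the degree gap it certifies between consecutive high-degree vertices is in fact $\omega(d)$ (tracing $\tfrac{\alpha(n)}{m^2}(pqn/\log n)^{1/2}$ with $m=h=n^{(1-\epsilon)/8}$ and $p=\omega(n^{-\epsilon}\log n)$), which dwarfs the at most $d-1$ change any vertex degree can undergo under $\dist_v(G,G_A)<d$. Consequently the degree ranks used by $\wlabel$ are preserved and the high-degree vertices are matched correctly; and each medium-degree neighborhood bit vector, perturbed in fewer than $d$ coordinates yet $C\log n=\omega(d)$ apart from every other, is still matched to its true counterpart by \textsf{approximate-isomorphism}, so neither uniqueness test in $\wlabel$ fails. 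Event (2) is a direct invocation of \cref{lem:id-sep}: with $\ell=8(2C+C')n^\epsilon$ and $p=\omega(n^{-\epsilon}\log n)$ one has $\ell p\ge 8(2C+C')\log n$, so a union bound over the $\binom{k}{2}=O(n^{2C})$ pairs of identifiers leaves a failure probability of $n^{-C'}$.

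Event (3) is the heart of the argument, and its key enabler is an independence property: because $\wmark$ resamples each candidate edge with its true marginal probability $p$, the marked graph $G'$ is itself distributed exactly as $G(n,p)$ and is \emph{independent} of the key $z$. Hence the adversary's flipped edges $G'\oplus G_A$, being a deterministic function of $G'$, are independent of $z$, and the $O(dn)$ such ``red'' edges (the count coming from $\dist_v(G,G_A)<d$ together with the marking degree bound $t=d$) form a fixed target against which the $\ell$ positions chosen by $\keygen$ behave like a uniform sample without replacement respecting the degree constraint. I would then apply \cref{lem:guess} with $N=n$ to conclude that the number of corrupted mark bits stays below $D/2$ except with probability $O(n^{-\frac{3}{7}(2C+C')})$. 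Correctness of the final nearest-identifier step then follows by the triangle inequality: the extracted string lies within $D/2$ of the true $\id$ while every other identifier is at distance greater than $D/2$ from it, so $\id$ is returned uniquely.

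The step I expect to be the main obstacle is verifying the hypothesis \cref{eq:convergence} needed to invoke \cref{lem:guess}. As in the analogous power-law computation, this reduces to checking $\ell^{t+1}/N^{t-1}\to 0$; substituting $\ell=\Theta(n^\epsilon)$, $N=n$, and $t=d$ produces the exponent $\epsilon(d+1)-(d-1)$ on $n$, so the requirement is exactly $\epsilon\frac{d+1}{d-1}<1$, which is precisely the hypothesis \cref{eq:erdos-t} assumed in the statement. Beyond this, the other delicate point is the rigorous justification of the independence claim and the red-edge count $r=O(dn)$: one must argue carefully that the adaptive but $z$-oblivious adversary extracts no information about the candidate set from $G'$, so that \cref{lem:guess} may legitimately treat the mark positions as a fresh random sample.
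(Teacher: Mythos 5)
Your proposal follows essentially the same route as the paper's proof: the same bound on the advantage by the conditional probability $\prob{\identify(z,G,\id_1,\dots,\id_k,G_A)\neq\id \mid \dist_v(G,G_A)<\theta}$, the same three-event decomposition (separation of the original graph via \cref{lem:erdos-sep}, identifier separation via \cref{lem:id-sep}, and limited adversarial corruption of the mark via \cref{lem:guess}), and the same verification that \cref{eq:convergence} with $\ell=\Theta(n^\epsilon)$, $N=n$, $t=d$ yields the exponent $\epsilon(d+1)-(d-1)$ and hence reduces exactly to \cref{eq:erdos-t}. Your $(d,C\log n)$-separation argument with the observation that the certified degree gap is $\omega(d)$ plays the same role as the paper's $(4d,4d)$-separation condition (both account for up to $d$ adversarial flips plus $t=d$ watermark flips per vertex), and your final independence concern is legitimate: the paper likewise relies, without much comment, on the fact that resampling with the true marginals makes the marked graph independent of the key $z$, so the adversary's flips may be treated as fixed red edges against the random sample in \cref{lem:guess}.

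One quantitative step in your event (3) is misapplied, though the conclusion survives. You bound the probability of at least $D/2$ corrupted mark bits by $O(n^{-\frac{3}{7}(2C+C')})$; that figure is imported from the power-law proof (\cref{thm:power-det-adv}), where the parameters are tuned so that $R = 8\ell r/N^2$ equals $D/2$. Here that is unavailable: with $\ell=\Theta(n^\epsilon)$, $r\leq dn/2$, and $N=n$ one has $8\ell r/N^2 = O(n^{\epsilon-1})\to 0$, so the first branch of \cref{lem:guess} pins $R$ at a quantity tending to zero, not at the growing threshold $D/2$, and its bound $4\exp\Round{-12\ell r/(7N^2)}\to 4$ is vacuous. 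The paper instead invokes the second branch: since $\ell r/N^2\to 0$, the probability that the sample contains even one red edge is at most $4\exp\Round{-cN^2/(\ell r)} = O\Round{\exp(-cn^{1-\epsilon})}$, i.e., with overwhelming probability the adversary touches \emph{no} watermark edge at all --- a condition stronger than your ``fewer than $D/2$'' requirement, and superpolynomially rather than merely polynomially small. Substituting this one step, your argument coincides with the paper's.
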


The proof of this theorem relies on two lemmas.  \Cref{lem:id-sep} identifies conditions under which a set of bit vectors with bits independently set to 1 is unlikely to have two close bit vectors. \Cref{lem:guess} states that a deterministic adversary's ability to guess the location of the watermark is limited. Informally, this is because the watermarked graph was obtained through a random process, so that there are many likely original graphs that could have produced it.

\begin{lemma}[Separation of IDs]
\label{lem:id-sep}
Consider $k=n^C$ random bit strings of length $\ell$, where each bit is independently set to 1, and the i-th bit is 1 with probability $q_i$ satisfying $p \leq q_i \leq 1/2$ for a fixed value $p$. The probability that at least two of these strings are within Hamming distance $D = 4(2C + C') \log n$ of each other is at most $n^{-C'}$ if $\ell p \geq 2D$.
\end{lemma}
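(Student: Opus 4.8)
The plan is to fix a single pair of strings, show that their Hamming distance concentrates well above $D$, and then take a union bound over all $\binom{k}{2}$ pairs. First I would fix two of the $k$ strings and write their Hamming distance as $X = \sum_{i=1}^{\ell} Y_i$, where $Y_i$ is the indicator of the two strings disagreeing in position $i$. Since the two strings are generated independently bit-by-bit, the $Y_i$ are independent Bernoulli variables with $\prob{Y_i = 1} = 2q_i(1 - q_i)$. Because $x(1-x)$ is increasing on $[0, 1/2]$ and $p \leq q_i \leq 1/2$, each term satisfies $2q_i(1-q_i) \geq 2p(1-p) \geq p$ (the last step using $p \leq 1/2$), so $\expec{X} \geq \ell p$.

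Next I would apply the lower-tail Chernoff bound of \cref{lem:chernoff} to $X$. Writing $\mu = \expec{X} \geq \ell p$ and taking deviation $\lambda = \mu/2$ gives $\prob{X \leq \mu/2} \leq e^{-\mu/8} \leq e^{-\ell p/8}$. The hypothesis $\ell p \geq 2D$ ensures $D \leq \ell p/2 \leq \mu/2$, so the event that this fixed pair lies within Hamming distance $D$ is contained in $\{X \leq \mu/2\}$ and is therefore also bounded by $e^{-\ell p/8}$. Substituting $D = 4(2C+C')\log n$ and using $\ell p \geq 2D = 8(2C+C')\log n$, this becomes $e^{-\ell p /8} \leq n^{-(2C+C')}$.

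Finally, I would union-bound over the fewer than $k^2 = n^{2C}$ unordered pairs of strings. The probability that some pair is within Hamming distance $D$ is then at most $n^{2C} \cdot n^{-(2C+C')} = n^{-C'}$, as claimed.

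The step requiring the most care is the choice of the Chernoff deviation $\lambda$. The expectation $\mu$ is only bounded below by $\ell p$ and can be as large as $\ell/2$ (when the $q_i$ are near $1/2$), so a naive fixed choice such as $\lambda = \ell p/2$ leaves $\mu$ in the denominator of the exponent and weakens the bound when $\mu \gg \ell p$. Scaling $\lambda$ with the expectation (i.e.\ $\lambda = \mu/2$) is what keeps the exponent cleanly at $\mu/8 \geq \ell p/8$ regardless of the actual value of $\mu$; verifying that $D \leq \mu/2$, which is exactly where the assumption $\ell p \geq 2D$ is consumed, then closes the argument.
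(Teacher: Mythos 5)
Your proof is correct and follows the paper's argument essentially verbatim: lower-bound the expected pairwise Hamming distance by $2\ell p(1-p) \geq \ell p$, apply the lower-tail Chernoff bound of \cref{lem:chernoff} to get a per-pair failure probability of $e^{-\ell p/8} \leq n^{-(2C+C')}$, and union-bound over the $k^2 = n^{2C}$ pairs. Your one deviation---taking $\lambda = \expec{X}/2$ rather than the paper's fixed $\lambda = \ell p/2$---is in fact the more careful choice, since with a fixed $\lambda$ the exponent $\lambda^2/(2\expec{X})$ degrades when $\expec{X} \gg \ell p$, which is exactly the subtlety you flag and which the paper's proof glosses over.
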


\begin{proof}
\ifFull
\proofLemIDSep
\else
See the ePrint version (to be posted).
\fi
\end{proof}

\begin{lemma}[Guessing power of adversary]
\label{lem:guess}
Consider a complete graph on $N$ vertices, and let $r$ of its edges be red. Let $s$ be a sample of $\ell$ edges chosen uniformly at random among those that satisfy the constraint that no more than $t$ edges of the sample can be incident to any one vertex. Suppose also that $\ell, N$ and $t$ are non-decreasing functions of $n$ such that
\begin{equation}
  \label{eq:convergence}
  \frac{\ell^{t + 1}}{N^{t - 1}} \to 0  \ \text{as} \ n \to \infty. 
\end{equation}
For sufficiently large $N$, the probability that $s$ contains at least $R = 8\ell r/N^2$ red edges is bounded by $4 \exp\Round{-{12\ell r}/(7{N^2})}$. Moreover, if $\ell r/N^2 \to 0$, then the probability that $s$ contains at least $R = 1$ red edge is bounded by $4\exp\Round{-{cN^2}/{(\ell r)}}$, for some $c > 0$ and for sufficiently large $N$.
\end{lemma}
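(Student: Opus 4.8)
The plan is to compare the constrained sample against an unconstrained one. Let $A$ be the event that an unconstrained uniform sample of $\ell$ edges (drawn without replacement, ignoring the degree cap) contains at least $R$ red edges, and let $B$ be the event that such a sample happens to satisfy the degree constraint. Since the constrained sample is exactly the unconstrained sample conditioned on $B$, the quantity we want is $\prob{A \mid B} \le \prob{A}/\prob{B}$, so it suffices to bound $\prob{B}$ below by a constant and $\prob{A}$ above by the claimed exponential.

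For the lower bound on $\prob{B}$, I would not reason about without-replacement edge sampling directly but pass to a still simpler model: draw $2\ell$ vertices independently and uniformly with replacement and pair them consecutively into $\ell$ edges. Conditioning this vertex model on producing neither a self-loop nor a repeated edge (event $D$) recovers the uniform without-replacement edge model, so writing $C$ for the degree-violation event we get $\prob{\bar B} \le \prob{C}/\prob{D}$. A union bound over self-loops and coinciding pairs gives $\prob{\bar D} = O(\ell/N)$, which tends to $0$ since \cref{eq:convergence} forces $\ell/N \to 0$; hence $\prob{D}$ is bounded away from $0$. In the with-replacement vertex model the incidences at the $N$ vertices are nearly independent, so I would approximate the number of sampled edges at a fixed vertex by a Poisson variable of mean $2\ell/N$, bound the probability it exceeds $t$ by the standard Poisson tail bound, correct by the usual Poisson-approximation adjustment factor, and union-bound over all $N$ vertices. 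The resulting estimate is a constant times $\ell^{t+1}/N^{t-1}$, which tends to $0$ precisely by \cref{eq:convergence}; therefore $\prob{C} \to 0$ and $\prob{B} \ge 1/2$ for large $N$.

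For the upper bound on $\prob{A}$, I would again trade without-replacement for with-replacement, this time sampling $\ell$ edges with replacement; conditioning on the edges being distinct (event $F$) recovers the without-replacement model, and $\prob{\bar F} = O((\ell/N)^2) \to 0$, so $\prob{F} \ge 1/2$ for large $N$. In the with-replacement edge model the number $X$ of red edges is a sum of $\ell$ independent indicators, so \cref{lem:chernoff} applies, with $\expec{X} = \ell r/\binom{N}{2}$ satisfying $2\ell r/N^2 \le \expec{X} \le 4\ell r/N^2 = R/2$. Applying the upper-tail Chernoff bound with $\lambda = R - \expec{X} = \Theta(\ell r/N^2)$ bounds $\prob{X \ge R}$ by $\exp(-\tfrac{12}{7}\,\ell r/N^2)$. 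Chaining the two constant-factor losses from $\prob{F}$ and $\prob{B}$ then yields the stated bound $4\exp(-12\ell r/(7N^2))$.

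The ``moreover'' clause concerns the regime $\ell r/N^2 \to 0$, where $\expec{X} \to 0$; I would run the same Chernoff machinery on the with-replacement edge model, now choosing $\lambda = 1 - \expec{X}$ so that $\expec{X} + \lambda = 1$, to control $\prob{X \ge 1}$, and simplify the exponent to the claimed form. I expect the main obstacle to be the lower bound on $\prob{B}$: the degree cap couples the sampled edges, and the only clean way I see to decouple them is the two-stage reduction through the with-replacement vertex model, where one must verify both that the Poisson approximation's adjustment error does not swamp the tail and that the final expression reduces exactly to $\ell^{t+1}/N^{t-1}$ so that \cref{eq:convergence} can be invoked.
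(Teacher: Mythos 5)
Your proposal matches the paper's own proof essentially step for step: the same decomposition $\prob{A \mid B} \le \prob{A}/\prob{B}$, the same reduction of $\prob{B}$ through the with-replacement vertex-pairing model conditioned on no self-loops or duplicates, with a Poisson tail bound and the standard adjustment factor yielding a constant times $\ell^{t+1}/N^{t-1}$, the same with-replacement edge model conditioned on distinctness plus \cref{lem:chernoff} for $\prob{A}$, and the same choice $\lambda = 1 - \expec{X}$ in the $\ell r/N^2 \to 0$ regime. The only cosmetic deviation is taking $\lambda = R - \expec{X}$ where the paper takes $\lambda = \expec{X}$ in the Chernoff step, which is immaterial (and, if anything, reaches the stated constant more directly).
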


\begin{proof}
\ifFull
\proofLemGuess
\else
See the ePrint version (to be posted).
\fi
\end{proof}

\begin{proof}[\cref{thm:erdos-det-adv}]
An upper bound on the advantage of any deterministic adversary $A: \mathcal{G} \to \mathcal{G}$ on graphs on $n$ vertices is given by the conditional probability
\[
\prob{\identify(z, G, \id_1,\ldots,\id_k, G_A) \neq \id | \dist_v(G, G_A) < \theta},
\]
where the parameters passed to $\identify$ are defined according to the experiment in \cref{alg:security}. We show that this quantity is polynomially negligible.

For $G_A$ to be successfully identified, it is sufficient for the following three conditions to hold:
\begin{compactenum}[1.]
\item the original graph $G = G(n,p)$ is $(4d, 4d)$-separated;
\item the Hamming distance between any two $\id$ and $\id'$ involved in a pair in $S$ is at least $D = 4(2C + C') \log n$;
\item $A$ changes no edges of the watermark.
\end{compactenum}
These are sufficient conditions because we only test graphs whose vertices had at most $d$ incident edges modified by the adversary, and another $d$ incident edges modified by the watermarking. So for original graphs that are $(4d, 4d)$-separated, the labeling of the vertices can be successfully recovered. Finally, if the adversary does not modify any potential edge that is part of the watermark, the $\id$ of the graph is intact and can be recovered from the labeling.

Now, by \cref{lem:erdos-sep}, the probability that $G(n,p)$ is not $(4d,4d)$-separated is less than $O(n^{-(1-\epsilon)/8})$. Moreover, since $\ell p \geq 2D$, by \cref{lem:id-sep}, the probability that there are two identifiers in $S$ that are within $D$ of each other is at most~$n^{-C'}$.

Finally, for graphs in which an adversary makes fewer than $d$ modifications per vertex, the total number of edges the adversary can modify is $r \leq dn/2$. Since all vertices are high- and medium-degree vertices in this model, $N = n$. Therefore, $\ell r/N^2 =O(1/n^{(1-\epsilon)}) \to 0$. \Cref{eq:erdos-t} guarantees that the hypothesis given by \cref{eq:convergence} of \cref{lem:guess} is satisfied. Consequently, the probability that $A$ changes one or more adversary edges is $O(\exp[cn^{1-\epsilon}])$ for some constant $c$.

This proves that each of the three conditions listed above fails with polynomially negligible probability, which implies that the conditional probability is also polynomially negligible.
\end{proof}

\begin{theorem}[Security against an arbitrary edge-flipping adversary in the random power-law graph model]
\label{thm:power-det-adv}
Let $5/2 < \gamma < 3$, $C>0$, $\hm = \hm(\epsilon_1, C_1)$ and $\ml = \ml(\epsilon_1, C_1, \epsilon_2, C_2)$ where $\epsilon_1 = 1$, $\epsilon_2 = 8(C+1)$ and $C_1 = C_2 = C$.

Let $p = P\Square{\ml,\ml}$. Suppose the similarity measure is a vector of distances $\dist = (\dist_e, \dist_v)$, that the corresponding similarity threshold is the vector $\theta =  (r, \log n)$ where $r = p(\ml)^2/32$ is the maximum number of edges the adversary can flip in total, and $\log n$ the maximum number number of edges it can flip per vertex. Suppose that we have $k = n^{C''}$ watermarked copies of the graph, that we use $\ell = 8(2C''+C')(\log n)/p$ to watermark a graph. 

Suppose also that the identifiers map to sets of edges of a graph constrained by the fact that no more than $t = \log n$ edges can be incident to any vertex. Then the watermarking scheme defined in \cref{alg:scheme} is $(G(\vector{w}^\gamma), \dist = (\dist_e, \dist_v), \theta = (r, \log n), k, \ell)$-secure against any deterministic adversary.
\end{theorem}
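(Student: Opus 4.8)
The plan is to mirror the proof of \cref{thm:erdos-det-adv}. I would first bound the advantage of an arbitrary deterministic adversary $A$ by the conditional probability that \identify\ returns the wrong answer given $\dist(G,G_A) < \theta = (r,\log n)$, and then isolate three events whose joint occurrence guarantees success: (1) the sampled graph $G = G(\vector{w}^\gamma)$ is $(4\log n, 4\log n)$-separated; (2) every pair of identifiers among the $k$ marked copies is at Hamming distance at least $D = 4(2C''+C')\log n$; and (3) the adversary flips fewer than $D/2$ of the edges that actually carry the watermark. Under these events, the $(4\log n,4\log n)$ separation is wide enough to tolerate the at most $\log n$ adversary edits and the at most $t=\log n$ watermark edits incident to each vertex, so \textsf{approximate-isomorphism} recovers the correct high- and medium-degree labeling; the nearest-ID rule then returns the true $\id$, since the perturbed codeword has moved by fewer than $D/2$ bits while remaining at distance more than $D/2$ from every other identifier.

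Next I would bound the failure probability of each event. Event (1) fails with probability at most $n^{-C_1}+n^{-C_2}$ by \cref{lem:power-sep} applied with the stated parameters $\epsilon_1 = 1$, $\epsilon_2 = 8(C+1)$, $C_1 = C_2 = C$ (the polynomially large degree gap $\epsilon_2 n^{1/(2\gamma-1)}$ trivially dominates the required $4\log n$, while $\epsilon_2 \ge 4$ supplies the neighborhood gap). For event (2), the structural point particular to this model is that each watermark bit is set to $1$ independently with probability $P[i,j]$ for indices $i,j\le\ml$; since $P[\cdot,\cdot]$ is decreasing in each argument, $p = P[\ml,\ml]$ is a uniform lower bound on all these probabilities, and the choice $\ell = 8(2C''+C')(\log n)/p$ gives $\ell p \ge 2D$, so \cref{lem:id-sep} bounds the failure probability of event (2) by $n^{-C'}$.

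The substantive work, and the step I expect to be the main obstacle, is event (3) via \cref{lem:guess}. In contrast to the Erd\H{o}s--R\'{e}nyi setting, here only the $\ml$ high- and medium-degree vertices participate in marking, so the correct universe size is $N = \ml$ rather than $n$, and the hypothesis \cref{eq:convergence} must be verified in the delicate regime $t = \log n \to \infty$. I would substitute the explicit forms of $p$, $\ell$, and $\ml$ (which carry the exponents $\Gamma$ and $\Gamma'$) into $\ell^{t+1}/\ml^{t-1}$ and track how the resulting exponent of $n$ depends \emph{linearly} on the growing parameter $t$. The crux is to show that the coefficient of $t$ in this exponent, which works out to a multiple of $(\gamma-2)(\gamma-3)/(2\gamma-1)$, is strictly negative throughout $5/2<\gamma<3$; this guarantees that the whole expression tends to $0$ despite $t\to\infty$, so \cref{eq:convergence} holds.

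Finally, with \cref{eq:convergence} confirmed I would apply \cref{lem:guess} by designating the adversary's flipped edges (at most $r = p\ml^2/32$ of them) as the red edges and the $\ell$ edges chosen by \keygen\ as the random sample. Because $\ell = 2D/p$, the threshold $R = 8\ell r/N^2$ evaluates to exactly $D/2$, so the lemma's first bound says the adversary hits at least $D/2$ watermark edges with probability at most $4\exp(-\tfrac{12}{7}\,\ell r/N^2) = 4n^{-\frac{3}{7}(2C''+C')}$, which is polynomially negligible. A union bound over the three negligible failure probabilities then yields the claimed $(G(\vector{w}^\gamma),\dist,\theta,k,\ell)$-security.
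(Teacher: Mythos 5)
Your proposal is correct and follows essentially the same route as the paper's proof: the same conditional-probability bound, the same three sufficient conditions with $D = 4(2C''+C')\log n$, the same invocations of \cref{lem:power-sep} and \cref{lem:id-sep} via $p = P[\ml,\ml]$, the verification of \cref{eq:convergence} with $N=\ml$, $t=\log n$ hinging on the negative coefficient $2(\gamma-2)(\gamma-3)/(2\gamma-1)$, and the application of \cref{lem:guess} with $\ell = 2D/p$ giving $R = D/2$ and the bound $4n^{-\frac{3}{7}(2C''+C')}$. You even supply slightly more detail than the paper (which defers to the proof of \cref{thm:erdos-det-adv}) on why the $(4\log n,4\log n)$-separation suffices to recover the labeling.
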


\begin{proof}
\ifFull
\proofThmPowerDetAdv
\else
See appendix in the ePrint version (to be posted).
\fi
\end{proof}

\subsection{Discussion}
It is interesting to note how the differences in the two random graph models translate into differences in their watermarking schemes. The Erd\H{o}s-R\'{e}nyi model, with its uniform edge probability, allows for constant separation of high-degree vertices, at best. But all the vertices tend to be well-separated. On the other hand, the skewed edge distribution that is characteristic of the random power-law model allows high-degree vertices to be very well-separated, but a significant number of vertices---the low-degree ones, will not be easily distinguished.

These differences lead to the intuition that virtually all edges in the Erd\H{o}s-R\'{e}nyi model are candidates for use in a watermark, as long as only a constant number of selected edges are incident to any single vertex. Therefore, both our watermarking function and the adversary are allowed an approximately linear number of changes to the graph. \Cref{thm:erdos-det-adv} confirms this intuition with a scheme that proposes $O(n^\epsilon)$ bits for the watermark, and a nearly linear number $O(n)$ bits that the adversary may modify.

In contrast, the number of edges that can be used as part of a watermark in the random power-law graph model is limited by the number of distinguishable vertices, which is on the order of $\ml$ or $O(n^\epsilon)$, where $\epsilon = -\frac{2\gamma^2 - 8  \gamma + 5}{2\gamma - 1}$. 

%% This is actual computation here. Do not remove
%
%\[
%r 
%= \frac{p(\ml)^2}{32} 
%= \frac{K_0}{32} n^{-\frac{\gamma - 3}{\gamma -1}}(\ml)^{2(1-\frac{1}{\gamma - 1})} 
%= \frac{K_0}{32} K_2^{2(1-\frac{1}{\gamma - 1})} n^{-\frac{\gamma - 3}{\gamma -1}+ 2\Gamma(1-\frac{1}{\gamma - 1})}\Round{\log n}^{ 2\Gamma'(1-\frac{1}{\gamma - 1})} 
%= \frac{K_0}{32} K_2^{2(1-\frac{1}{\gamma - 1})} n^{-\frac{4\gamma^2-18\gamma + 17}{2\gamma - 1}} \Round{\log n}^{-\frac{6(\gamma - 1)(\gamma - 2)}{2\gamma - 1}}
%\]
%
%- dependence of length of watermark on number of copies is small, proportional to log(num copies)/log n 
%

\section{Experiments}

Although our paper is a foundational complement to the systems work of
\ifFull
Zhao {\it et al.}~\cite{DBLP:journals/corr/ZhaoLZZZ15,Zhao:2015},
\else
Zhao {\it et al.}~\cite{Zhao:2015},
\fi
we nevertheless provide in this section the results of a small set 
of empirical tests of our methods, so as to experimentally
reproduce the hypothetical
watermarking security experiment from Algorithm~\ref{alg:security}.
Our experiments are performed on
two large social network graphs,
Youtube~\cite{DBLP:journals/corr/abs-1205-6233} from the SNAP
library~\cite{snap}, and Flickr~\cite{mislove-2007-socialnetworks},
as well as a randomly generated graph drawn from the random power-law
graph model distribution. Table~\ref{tab:statistics} illustrates
the basic properties of the networks. To generate the random power-law
graph, we set the number of nodes to $n=10000$, the maximum degree
to $m=1000$, the average degree to $w=20$, and $\gamma=2.75$.

\begin{table}[b!]
\centering
\caption{Network statistics}
\label{tab:statistics}
\begin{tabular}{|l|r|r|r|r|r|r|}
\hline
Network & \multicolumn{1}{l|}{\# nodes} & \multicolumn{1}{l|}{\# edges} & \multicolumn{1}{l|}{Max.\ degree} & \multicolumn{1}{l|}{Avg.\ degree} & \multicolumn{1}{l|}{Unique degree} & \multicolumn{1}{l|}{Estimated $\gamma$} \\ \hline
Power-law & 10,000 & 94,431 & 960 & 18.89 & 14 & --- \\
Youtube & 1,134,890 & 2,987,624 & 28,754 & 5.27 & 29 & 1.48 \\
Flickr & 1,715,256 & 15,554,181 & 27,203 & 18.14 & 130 & 1.62 \\ \hline
\end{tabular}
\end{table}

\begin{table}[b!]
\centering
\caption{Experiment Parameters}
\label{tab:settings}
\begin{tabular}{|l|r|r|r|r|}
\hline
Network & \multicolumn{1}{l|}{\# high-degree} & \multicolumn{1}{l|}{\# medium-degree} & \multicolumn{1}{l|}{Key size} & \multicolumn{1}{l|}{Marking dK-2 deviation}\\ \hline
Power law & 64 & 374 & 219 & 0.065\\
Youtube & 256 & 113 & 184 & 0.033\\
Flickr & 300 & 5901 & 3250 & 0.002\\ \hline
\end{tabular}
\end{table}

\ifFull
\subsection{Adaptations from the theoretical scheme}
\fi

To adapt our theoretical framework to the rough-and-tumble world of 
empirical realities, we made three modifications to our framework for
the sake of our empirical tests.

First, instead of using the high-degree and medium-degree thresholds
derived from Lemmas~\ref{lem:sep-high} and~\ref{lem:sep-mid},
for the power-law distribution, to define
the cutoffs for high-degree and medium-degree vertices, we
used these and the other lemmas 
given above as justifications for the existence of such distinguishing
sets of vertices and we then 
optimized the number of high- and medium-degree vertices to be values that
work best in practice. 
The column, ``Unique degree,'' 
from Table~\ref{tab:statistics} shows, for each network,
the number of consecutive nodes with unique degree when considering
the nodes in descending order of degree. 
\ifFull
This is, in theory, the
maximum number of high-degree nodes that could be distinguished.
\fi
Since this value is too small in most cases, we applied the principles
of Lemmas~\ref{lem:erdos-sep} and~\ref{lem:sep-high} again,
in a second-order fashion,
to distinguish and order the high-degree nodes. 
In particular, in addition to the degree of each high-degree vertex,
we also label each vertex with the list of degrees of its neighbors, sorted in
decreasing order. With this change, we are not restricted in our
choice of number of high-degree nodes as required by applying
these lemmas only in a first-order fashion. Table~\ref{tab:settings}
shows the values used in our experiments based on this second-order
application. As medium-degree vertices, we picked the
maximum number such that there are no collisions among their bit
vectors of high-degree node adjacencies.

Second, instead of returning failure if (a) 
two high-degree nodes have the same degree and list of degrees 
of their neighbors, (b) two medium-degree nodes have the same bit vector, or 
(c) the approximate isomorphism is not injective, we instead 
proceed with the algorithm. 
Despite the existence of collisions, the remaining nodes often provide enough information to conclude successfully.

Finally, 
we simplified how we resampled (and flipped) edges in order to create
a graph watermark, using our approach for the
Erd\H{o}s-R\'{e}nyi model even for power-law graphs,
since 
resampling uniformly among our small set of marked edges is likely not to
cause major deviations in the graph's distribution
and, in any case,
it is empirically difficult to determine the value of $\gamma$ for
real-world social networks.  Therefore, we set the
resampling probability to $0.5$ so that it is consistent with the
Erd\H{o}s-R\'{e}nyi model and so that each bit in the message is
represented uniformly and independently.

\ifFull
\subsection{Fitting real-world networks to the random power-law graph model}

Note that during the marking step, resampling the edges of the key requires 
knowledge of the distribution from which the network is drawn. For this reason, 
we tried to fit the real-world networks to the random power-law graph model 
distribution. The main task was to find the exponent $\gamma$ of a power law 
function that would best fit the degree distribution.

First, we give definitions to help introduce the problem.  The degree distribution of an 
undirected graph $G$ with $n$ vertices is a probability distribution such that 
the probability mass function $P_G$ is given by $P_G(k) =  n_k/ n$ for 
$k \in \{0, \ldots, n\}$ where $n_k$ is the number of vertices with degree $k$.  
A random variable is said to follow a power-law distribution if the probability 
density function is given by $f(x) = cx^{-\gamma}$ for some $c > 0$ and 
$\gamma > 0$.  As stated before, it has been found empirically that the degree 
distributions of many naturally-occurring graphs in social networking and 
science follow a power-law distribution.  Note that for observed data that is 
believed to follow a power-law distribution, the power-law behavior of the data 
often only holds for values larger than some $x_{\min}$ 
(see~\cite{doi:10.1137/070710111}).

In finding $\gamma$, there are three primary focuses: obtaining $\gamma$ 
itself, finding the value for $x_{\min}$ for which the power-law behavior 
holds after, and finding the associated p-value indicating how good of a fit 
the power-law distribution is to the degree distribution of $G$. The methods 
to obtain all three values are taken from 
Clauset {\it et al.}~\cite{doi:10.1137/070710111} and some of the code used 
to find all three values can be found at~\cite{plpval, plfit}. 

To obtain $\gamma$, we use the method of maximum 
likelihood.  The maximum-likelihood estimator for $\gamma$ is given by 
\begin{equation}\label{eq:estimate-gamma}
\gamma = 1 + k\left(\sum_{i=1}^k\ln\frac{x_i}{x_{\min}}\right)^{-1}
\end{equation}
where $x_i$ for $i = 1, \ldots, k$ are the values of the degree distribution
of $G$ such that $x_i \ge x_{\min}$ 
(see~\cite{doi:10.1137/070710111, muniruzzaman1957measures}).

Next, we discuss finding $x_{\min}$. $x_{\min}$ is the lower bound for 
which the data above it follow a power-law distribution.  The idea used to 
find $x_{\min}$ is to choose the value in the observed data (some $x$ 
such that $x \in \{P_G(k) : k = 0,\ldots, n\}$) such that the empirical 
cumulative distribution function (CDF) of the observed data above 
$x_{\min}$ is most similar to the estimated CDF of the power-law 
distribution of the observed data obtained by using Equation~\ref{eq:estimate-gamma}  
(see~\cite{clauset2007frequency}).  We measure similarity between 
distributions using the Kolmogorov-Smirnov (KS) statistic.  So in order to 
find $x_{\min}$, for each $x_i$ we set $x_{\min} = x_i$ and compute the 
KS statistic for the empirical CDF of the observed data larger than 
$x_{\min}$ and the CDF of the power-law distribution obtained by using 
Equation~\ref{eq:estimate-gamma}. The $x_i$ that minimizes the KS 
statistic will be used as the value for 
$x_{\min}$~\cite{doi:10.1137/070710111}.

Last, we discuss finding the p-value based off of the methods 
in~\cite{doi:10.1137/070710111}.  To find the p-value indicating how well 
the power-law distribution fits the degree distribution of $G$, we use the 
methods described above to find $\gamma$ and $x_{\min}$.  After doing 
so, we compute the KS statistic for the empirical CDF of the data and 
the CDF of the power-law distribution with exponent $\gamma$ and 
$x \ge x_{\min}$. We then generate new data sets from our observed 
data.  Let $n_{\text{tail}}$ be the number of observed data larger than 
$x_{\min}$.  Then with probability $n_{\text{tail}}/n$, we randomly sample 
a point from the power-law distribution with exponent $\gamma$ and 
$x \ge x_{\min}$ then add it to the new data set. With probability 
$1 - n_{\text{tail}}/n$, we randomly sample an observed data point in the 
interval $x < x_{\min}$ and add it to the new data set.  This process is 
continued until we have added $n$ total data points.  Then we compute 
the KS statistic for the empirical CDF of the newly generated data set 
and the CDF of the estimated power-law distribution using this newly 
generated data set.  We generate 10,000 data sets and compute the 
KS statistic for each, where 10,000 is a good rule of thumb in order to 
have high precision in the p-value.  The p-value is then determined by 
the fraction of the number of generated KS statistic values that are larger 
than the KS statistic obtained from the original data.  A p-value larger 
than 0.1 implies that the power-law with exponent $\gamma$ and 
$x \ge x_{\min}$ is a good fit to our data~\cite{doi:10.1137/070710111}.

We gathered graphs from many different domains to test if the power-law
distribution is a good fit to the degree distribution.  We tested social networks
from Google+~\cite{mcauley2012learning}, LiveJournal~\cite{mislove-2007-socialnetworks},
Slashdot~\cite{leskovec2009community}, Epinions~\cite{richardson2003trust}, 
Pokec~\cite{takac2012data}, and Twitter~\cite{mcauley2012learning}.
We also tested citation networks of U.S. patents~\cite{leskovec2005graphs} and 
ArXiv~\cite{leskovec2005graphs, gehrke2003overview}, collaboration networks of 
ArXiv~\cite{leskovec2007graph} and the DBLP computer science 
bibliography~\cite{DBLP:journals/corr/abs-1205-6233}, email communication 
networks from the Enron corpus~\cite{leskovec2009community, klimt2004introducing}
and a European research institution~\cite{leskovec2007graph}, a communication 
network from Wikipedia's talk pages~\cite{leskovec2010signed}, location-based online social networks 
(OSN)~\cite{cho2011friendship}, Internet autonomous systems (AS)
networks~\cite{leskovec2005graphs, oregon-RVProject, caida-ASgraph, caida-Skitter}, 
snapshots of the graph of the peer-to-peer file sharing service 
Gnutella~\cite{leskovec2007graph, ripeanu2002mapping}, road networks
from Pennsylvania, Texas, and California~\cite{leskovec2009community}, 
and product co-purchasing networks from 
Amazon~\cite{DBLP:journals/corr/abs-1205-6233, leskovec2007dynamics}.  Additionally, 
we tested web graphs where the nodes represent web pages 
and the edges are hyperlinks connecting the pages for part of Stanford's 
website~\cite{leskovec2009community}, Notre Dame's 
website~\cite{leskovec2009community}, Stanford and Berkeley's 
websites~\cite{leskovec2009community}, and Google~\cite{leskovec2009community}. 
Our results, shown in Table~\ref{tab:gamma-values}, display that the power-law distribution is a good fit for the
degree distribution for 29 of the 40 graphs tested.  Moreover, 23 of the 29 cases 
where it is a good fit, the estimated $\gamma$ is less than $2$.  This prevented
us from using Equation~\ref{eq:powerlaw-probs} for the resampling
probabilities, which requires $\gamma>2$.

\afterpage{%
\begin{table}[b!]
\centering
\caption{Estimated $\gamma$ and associated p-values}
\label{tab:gamma-values}
\begin{tabular}{|c|l|r|r|r|r|}
\hline
Network Type & Network & \multicolumn{1}{l|}{\# nodes} & \multicolumn{1}{l|}{\# edges} & \multicolumn{1}{l|}{Estimated $\gamma$} & \multicolumn{1}{l|}{p-value}\\ \hline
\multirow{7}{75pt}{Social \\Networks} & Google+ & 107,614 & 12,238,285 & 1.8129 & 0.0096 \\
 & LiveJournal & 5,203,764 & 48,709,773 & 1.3901 & 0.0171 \\
 & Slashdot 08/11 & 77,360 & 469,180 & 1.5947 & 0.6425 \\
 & Slashdot 02/09 & 82,168 & 504,230 & 1.5906 & 0.3517 \\
 & Epinions & 75,879 & 405,740 & 1.6893 & 0.9979 \\
 & Pokec & 1,632,803 & 22,301,964 & 2.4144 & 0.8383 \\
 & Twitter & 81,306 & 1,342,296 & 1.4844 & 0.0000 \\ \hline
 
\multirow{3}{75pt}{Citation\\Networks}  & ArXiV (Physics) & 34,546 & 420,877 & 1.3931 & 0.0000 \\
 & ArXiV (Phy. Theory) & 27,769 & 352,285 & 1.4309 & 0.0000 \\
 & Patents & 3,774,768 & 16,518,947 & 1.2776 & 0.2102 \\ \hline
 
\multirow{4}{75pt}{Collaboration\\Networks}  & ArXiV (Astro.) & 18,771 & 198,050 & 2.0062 & 0.3105 \\
 & ArXiV (Condensed) & 23,133 & 93,439 & 1.3896 & 0.0715 \\
 & ArXiV (Physics) & 12,006 & 118,489 & 1.8967 & 0.5440 \\
 & DBLP & 317,080 & 1,049,866 & 1.3405 & 0.0530 \\ \hline
 
 \multirow{3}{75pt}{Communication\\Networks} & Email - Enron & 36,692 & 183,831 & 1.6143 & 0.9884 \\
 & Email - Europe & 265,009 & 364,481 & 1.4703 & 0.9998 \\
 & Wikipedia & 2,394,385 & 4,659,565 & 1.5232 & 0.9933 \\ \hline
 
\multirow{2}{75pt}{Location-based\\OSNs}  & Brightkite & 58,228 & 214,078 & 1.5403 & 0.4212 \\
 & Gowalla & 196,591 & 950,327 & 1.4542 & 0.3132 \\ \hline
 
\multirow{4}{75pt}{AS Networks}  & CAIDA & 26,475 & 53,381 & 1.5446 & 0.9191 \\
 & Skitter & 1,696,415 & 11,095,298 & 1.4484 & 0.7464 \\
 & Oregon - 1 & 11,174 & 23,409 & 1.5523 & 0.9113 \\
 & Oregon - 2 & 11,461 & 32,730 & 1.6024 & 0.8396 \\ \hline
 
\multirow{5}{75pt}{Peer-to-peer\\Networks}  & Gnutella 08/04/02 & 10,876 & 39,994 & 2.5995 & 0.6346 \\
 & Gnutella 08/24/02 & 26,518 & 65,369 & 2.0109 & 0.7580 \\
 & Gnutella 08/25/02 & 22,687 & 54,705 & 2.0495 & 0.6715 \\
 & Gnutella 08/30/02 & 36,682 & 88,328 & 1.8832 & 0.8464 \\
 & Gnutella 08/31/02 & 62,586 & 147,892 & 1.9131 & 0.9071 \\ \hline
 
\multirow{5}{75pt}{Co-purchasing\\Networks} & Amazon 03/02/03 & 262,111 & 899,792 & 1.3310 & 0.6295 \\
 & Amazon 03/12/03 & 400,727 & 2,349,869 & 1.3680 & 0.5020 \\
 & Amazon 05/05/03 & 410,236 & 2,439,437 & 1.3580 & 0.5129 \\
 & Amazon 06/01/03 & 403,394 & 2,443,408 & 1.3684 & 0.6091 \\
 & Amazon (2012) & 334,863 & 925,872 & 1.3382 & 0.5789 \\ \hline

\multirow{3}{75pt}{Road\\Networks} & California & 1,965,206 & 2,766,607 & 1.1415 & 0.0785 \\
 & Pennsylvania & 1,088,092 & 1,541,898 & 3.1585 & 0.6290 \\
 & Texas & 1,379,917 & 1,921,660 & 1.1521 & 0.0642 \\ \hline
 
\multirow{4}{75pt}{Web\\Graphs} & Berkeley \& Stanford & 685,230 & 6,649,470 & 1.4591 & 0.0087 \\
 & Google & 875,713 & 4,322,051 & 1.3890 & 0.3828 \\
 & Notre Dame & 325,729 & 1,090,108 & 1.6577 & 0.8434 \\
 & Stanford & 281,903 & 1,992,636 & 1.4052 & 0.0036 \\ \hline
\end{tabular}
\end{table}
\clearpage}

%In particular,
%note that during the marking step, resampling the edges of the key requires knowledge of the distribution from which the network is drawn. For this reason, we tried to fit the real-world networks to the random power-law graph model distribution. The main task was to find the exponent $\gamma$ of a power law function that would best fit the degree distribution.
%
%To find $\gamma$ empirically, 
%therefore, we can use the method of maximum likelihood to give the maximum-likelihood estimator for the exponent of a power-law distribution,
%\begin{equation}\label{eq:estimate-gamma}
%\gamma = 1 + k\left(\sum_{i=1}^k\ln\frac{x_i}{x_{\text{min}}}\right)^{-1},
%\end{equation}
%where $x_i$ for $i = 1,\ldots, k$ are the values of the degree
%distribution of $G$ such that $x_i \ge
%x_{\text{min}}$~\cite{doi:10.1137/070710111}. The value $x_{\text{min}}$
%is the lower bound for which the data above it follows a power-law
%distribution.  It is obtained by determining the value for
%$x_{\text{min}}$ that minimizes the distance between the empirical
%distribution of the observed data and the cumulative distribution
%function of the estimated power-law distribution obtained from
%Equation~\ref{eq:estimate-gamma}~\cite{doi:10.1137/070710111}.
%However, our results showed that many social networks did not fit
%the model very well. Moreover, in most cases the estimated $\gamma$
%was less than $2$ (see Table~\ref{tab:statistics}). This prevented
%us from using Equation~\ref{eq:powerlaw-probs} for the resampling
%probabilities, which requires $\gamma>2$. 
\fi

\subsection{Experiment parameters}

For the experiment parameters other than the original network and the number of high- and medium-degree nodes, we set the following values.
\begin{description}
\item[Maximum flips adjacent to any given node during marking:] 1.
\item[Key size:] 
We set this to the maximum possible value 
(i.e., the number of high- and medium-degree vertices divided by two,
as shown in Table~\ref{tab:settings}), 
because the numbers of high- and medium- degree nodes are not large. 
This effectively means that every high- and medium-degree node 
has exactly one edge added or removed.
\item[Number of marked graphs:] 10.
\item[Adversary:] We used a time-efficient variation of the \textit{arbitrary edge-flipping adversary}. This adversary selects a set of pairs of nodes randomly, and flips the potential edge among each pair.
\end{description}

\subsection{Results}
We evaluated how much distortion the adversary can introduce before
our method fails to identify the leaked network correctly. For this
purpose, we compared the identification success rate to the amount
of distortion under different fractions of modified edges by the
adversary. To estimate the success rate, we ran the experiment 10
times and reported the fraction of times that the leaked network
was identified correctly. As a measure of distortion, we used the
dK-2 deviation~\cite{Zhao:2015} between the original network and
the version modified by the adversary. The dK-2 deviation is the
euclidean distance between the dK-2
series~\cite{Sala:2010:MGM:1772690.1772778} of the two graphs,
normalized by the number of tuples in the dK-2 series. The dK-2
deviation captures the differences between the joint degree
distributions of the networks, that is, the probability that a
randomly selected edge has as endpoints nodes with certain degrees.
We average the dK-2 deviation among the 10 runs. Figure~\ref{fig:results}
shows the outcome of our experiments. Moreover, Table~\ref{tab:settings}
shows the dK-2 deviation introduced by the marking alone.

\begin{figure}[hbt!]
\centering
\includegraphics[width=0.75\linewidth]{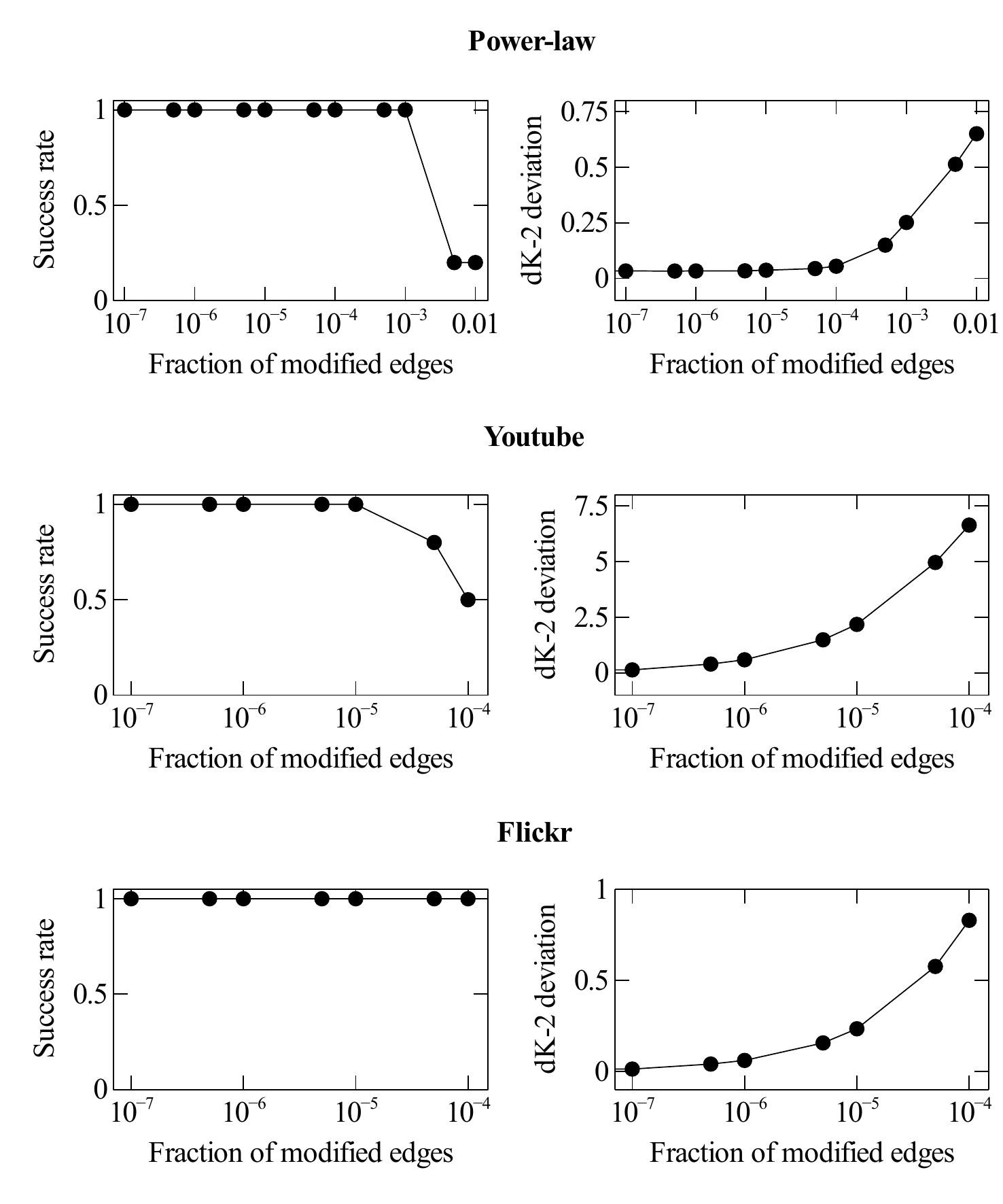}
\vspace*{-8pt}
\caption{Success rate and dK-2 deviation under different fractions
of modified potential edges by the adversary, for the Power law, 
Youtube, and Flickr networks.} \label{fig:results}
\end{figure}

\ifFull
\subsection{Discussion} 
\fi
Based on our experiments, 
the success rate of our scheme is high but it drops after a
certain threshold. This demonstrates that there is a distinct range
of adversarial edge flips that can be tolerated by our scheme.
Specifically, our scheme
worked well when the fraction of potential edges flipped by the adversary is up
to $10^{-3}$ and $10^{-5}$ for the random power-law and Youtube
networks, respectively. For these graphs, this number of flipped
potential edges corresponds to $52.9\%$ and $215.6\%$ of the number of edges
in the original graphs, respectively. For the Flickr network, the
runtime of the adversary modification became excessive before the
success rate could decrease, at a fraction of $10^{-4}$ of potential
edges flipped.

The distortion introduced by the watermark is negligible compared
to the distortion caused by the number of flips that the scheme can
tolerate. On average, the marking modifies half of the edges on the
key, which corresponds to 
$1.1\cdot10^{-3}, 3\cdot 10^{-5},\text{ and }10^{-4}$ of 
the number of edges in the original random power-law,
Youtube, and Flickr networks, respectively.

For the same number of flips, the dK-2 deviation in the Youtube
network was much larger than in the Flickr network, which in turn
was larger than that of the random power-law network. A possible
explanation for this is that any set of uniform edge flips has a bigger
effect on the dk2-deviation of a skewed graph than on the dK-2 deviation
of a less skewed graph. Note that the Youtube network has the largest
skew, as the maximum degree is on the same order as the Flickr
network, but the average degree is less.

\ifFull
\section{Conclusion}
We defined a watermarking framework and a notion 
of security meant to capture the 
difficulty in removing a watermark from a graph. 
We studied two random graph models and showed that watermarking in these 
models could be achieved in such a way that no adversary could remove 
the watermark whp and still have a graph that is ``close'' to his
original graph. 
A vital feature of our approach is that
we watermark graphs to look like typical graphs from 
the distribution the original graph was issued from, 
while also making them look similar to the original.
In addition, we provided an exemplary implementation that works effectively
for marks consisting only of edge flips.

For future work, it would be interesting to consider solutions that
can tolerate some degree of collusion. 
In the exemplary schemes we presented,
an adversary is likely to detect many edges of the watermark, 
if he has access to multiple watermarked graphs produced 
from the same original graph.

\subsection*{Acknowledgments} 
This research was supported in part by
the National Science Foundation under grants 1011840
and 1228639.
This article also reports on work supported by the Defense Advanced
Research Projects Agency (DARPA) under agreement no.~AFRL FA8750-15-2-0092.
The views expressed are those of the authors and do not reflect the
official policy or position of the Department of Defense 
or the U.S.~Government.
\fi

{\raggedright 
\ifFull
\bibliographystyle{abbrv} 
\else
\bibliographystyle{splncs03} 
\fi
\bibliography{paper} 
}
\end{document}